\newcommand*\diff{\mathop{}\!\mathrm{d}}
\def\urltilde{\kern -.15em\lower .7ex\hbox{\~{}}\kern .04em}
\def\urldot{\kern -.10em.\kern -.10em}
\def\urlhttp{http\kern -.10em\lower -.1ex\hbox{:}\kern -.12em\lower 0ex\hbox{/}\kern -.18em\lower 0ex\hbox{/}}
\tikzset{
  set arrow inside/.code={\pgfqkeys{/tikz/arrow inside}{#1}},
  set arrow inside={end/.initial=stealth, opt/.initial=},
  /pgf/decoration/Mark/.style={
    mark/.expanded=at position #1 with
    {
      \noexpand\arrow[\pgfkeysvalueof{/tikz/arrow inside/opt}]{\pgfkeysvalueof{/tikz/arrow inside/end}}
    }
  },
  arrow inside/.style 2 args={
    set arrow inside={#1},
    postaction={
      decorate,decoration={
        markings,Mark/.list={#2}
      }
    }
  },
}
\theoremstyle{plain}
\newtheorem{Theorem}{Theorem}
\newtheorem{Corollary}{Corollary}
\newtheorem{Example}{Example}}
\newtheorem{Remark}{Remark}
\newcommand{\qedwhite}{\hfill \ensuremath{\Box}}
\newenvironment{proof}[1][Proof]{\noindent\textbf{#1.} }{\ \rule{0.5em}{0.5em}}
\newcommand {\R}{\mathbb R}
\newcommand{\be}{\begin{equation}}
\newcommand{\ee}{\end{equation}}
\newcommand{\Adj}{\operatorname{{\mathrm adj}}}
\newcommand{\diag}{\operatorname{{\mathrm diag}}}
\begin{document}
\title{Approximating the Frequency Response of Contractive Systems}
\date{}
 \author{Michael Margaliot\thanks{M. Margaliot
 is with the School of Electrical Engineering and the Sagol School of Neuroscience, Tel-Aviv. 
 University, Tel-Aviv 69978, Israel.
 E-mail: \texttt{michaelm@eng.tau.ac.il}. The research of MM is partially supported by    research grants from  the Israeli Ministry of Science, Technology \& Space, the US-Israel Binational Science Foundation,  and   the
Israel Science Foundation~(ISF grant 410/15)} \and Samuel Coogan\thanks{S. Coogan is with the Department of  Electrical Engineering, University of California, Los Angeles. E-mail: \texttt{scoogan@ucla.edu}. }
 }

\maketitle
\begin{abstract}
We consider contractive systems 
whose trajectories evolve on a compact and convex state-space. It is well-known that if 
 the time-varying
 vector field of the system
 is periodic then 
 the system admits a unique globally asymptotically stable periodic solution.   
Obtaining explicit information on this periodic solution and its dependence
 on various parameters is important both theoretically and in 
numerous    applications. 
We develop an approach for approximating such a periodic trajectory using the periodic 
trajectory of a simpler system (e.g. an LTI system). Our approximation includes an error bound 
that is based on the input-to-state stability property of contractive systems. 
 We show that in some cases this error bound can be computed explicitly.
We also use the bound to derive a new theoretical result, namely, 
that a contractive system with an additive periodic input behaves like a low pass filter.
 We demonstrate our results
using  several  examples from  systems biology. 

 \end{abstract}
\section{Introduction}\label{sec:intro}

A dynamical system is called \emph{contractive}
 if any two trajectories approach each other~\cite{LOHMILLER1998683,sontag_cotraction_tutorial}. 
This is a strong property with   many important 
implications. For example, if the trajectories evolve on a compact and convex
state-space~$\Omega$ then the system admits an equilibrium point~$e\in \Omega$, 
and since every trajectory converges to the trajectory emanating from~$e$,
  $e$ is globally 
  asymptotically   stable. 	Note that establishing
	this does not require an explicit description of~$e$. 

More generally, contractive systems with a periodic
 excitation \emph{entrain}, that is, their trajectories converge to a periodic solution with the same period as the excitation.  
This property is very important in applications ranging from entrainment 
of biological systems to periodic excitations (e.g., the 24h solar day or the periodic cell-cycle division
 program) 
to the entrainment  of synchronous generators to the frequency of the electric grid. 
However, the proof of the entrainment property of contractive systems is based on implicit arguments
(see, e.g.~\cite{entrain2011})
and  provides no explicit  information on
 the periodic trajectory (except for its period).

Contraction theory has found numerous applications in  systems and control theory, systems biology~\cite{Russo2011_book_chap}, and more
(see e.g. the recent survey~\cite{sontag_cotraction_tutorial}). 
A particularly interesting line of research is based on combining contraction theory and graph theory
in order to study various networks of multi-agent systems (see, e.g.~\cite{hier_cont,Arcak20111219,cont_slotine_graph, coogan2015compartmental}).

As already noted by Desoer and Haneda~\cite{Desoer_cont}, 
contractive systems satisfy a special case of the input-to-state stability~(ISS)
  property (see the survey paper~\cite{sontag2008}). 
Desoer and Haneda used this to derive   
bounds on the error between trajectories of a continuous-time
 contractive system and its
time-discretized model. This is important when computing solutions of contractive systems using numerical integration methods~\cite{Arcak2015}. Sontag~\cite{con_with_input} has shown   
that contractive systems satisfy a ``converging-input converging output'' property.
A recent paper~\cite{lars16}  used the~ISS property 
to derive a bound on the  error between trajectories of a continuous-time
  contractive system and   those of some ``simpler'' continuous-time system (e.g. an LTI system). 
This  bound is particularly useful when the simpler  model can be solved explicitly.

Here, we derive new  bounds on the distance between the periodic trajectory of a 
  contractive system and the periodic trajectory of a ``simpler'' system, e.g. an LTI system with a periodic forcing. 
We show several cases where the periodic trajectory of the simpler system is explicitly known and the bound is also explicit, so this
provides considerable information on the unknown periodic trajectory of the contractive system. 
The explicit bounds also pave the way for new theoretical results. We demonstrate this by using one of the bounds to prove that any contractive system with an additive sinusoidal  forcing behaves like a low-pass filter, that is,
as the frequency of the sinusoidal signal goes to infinity the corresponding solution of the 
system converges to an equilibrium state. This generalizes the well-known
 behavior of asymptotically stable LTI systems. 

The remainder of this paper 
 is organized as follows. The next section reviews some properties of contractive systems  and in particular their~ISS property.
For more details, 
including the historic development of contraction theory, see e.g.~\cite{soderling_survey,cont_anc}.
The next three sections describe our main results. 
Section~\ref{sec:bounds} develops a bound for the difference between the periodic trajectories
 of two systems:
a contractive system and some simpler
 ``approximating'' system. We show using an example that in general this bound cannot be improved.
 Section~\ref{sec:approx} suggests two possible approximating systems for the case of a contractive system with
a periodic forcing. 
Section~\ref{sect:theory} shows how the explicit bounds can be used to derive
 a  new
theoretical result on  the frequency response of contractive systems. 
  The final section concludes and describes 
possible directions for further research.

\section{Preliminaries}

 Consider the time-varying dynamical
 system
 \be\label{eq:fdyn}
            \dot{x}(t)=f(t,x(t)),
 \ee
with the state $x$ evolving on a  positively invariant
convex set~$\Omega \subseteq \R^n $. We assume that~$f(t,x)$ is differentiable with respect to~$x$, and that
both~$f(t,x)$ and its Jacobian~$J(t,x):=\frac{\partial f}{\partial x}(t,x)$ are continuous in~$(t,x)$.
Let~$x(t,t_0,x_0)$   denote the solution of~\eqref{eq:fdyn}
 at time~$t \geq t_0$ for  the initial condition~$x(t_0)=x_0$.
 For the sake of simplicity, we assume from here on
 that~$x(t,t_0,x_0)$ exists and is unique for all~$t \geq t_0\geq 0$
 and all~$x_0 \in \Omega$.

The system~\eqref{eq:fdyn} is said to be 
  \emph{contractive} on~$\Omega$ with respect to  a vector norm $|\cdot| :\R^n \to \R_+$
 if there exists~$\eta>0$
 such that
 \be\label{eq:contdef}
            |x(t,t_0,a)-x(t,t_0,b)|  \leq   e^{ -   (t-t_0) \eta } |a-b|
 \ee
 for all $t\geq t_0\geq   0 $ and all~$a,b \in \Omega$.
This means that  any two trajectories approach 
 one another at an exponential rate~$\eta$. This implies
 in particular that the initial condition is ``quickly forgotten''.

Note that  
contraction can be defined in a more general way, for example   with respect to a time- and space-varying norm~\cite{LOHMILLER1998683}
(see also~\cite{contra_sep}).
We focus here on exponential contraction with respect to a \emph{fixed} vector norm because there exist  easy to check 
sufficient conditions, based on matrix measures, guaranteeing that~\eqref{eq:contdef} holds.  
A vector norm~$|\cdot|:\R^n\to \R_+$ induces a \emph{matrix measure}~$\mu:\R^{n\times n}\to\R$ defined  by
  \begin{align*}
				\mu(A):=\lim_{\varepsilon \downarrow 0} \frac{1}{\varepsilon} (||I+\varepsilon A ||-1) ,    
  \end{align*}
where~$||\cdot||:\R^{n\times n}\to\R_+$ is the matrix norm induced by~$|\cdot|$. For example, for the~$\ell_1$ vector norm, denoted~$|\cdot|_1$,
the induced matrix norm is    the maximum absolute column sum of the matrix,
and the induced matrix measure is
  \begin{align*}
      \mu_1(A)=\max\{c_1(A),\ldots, c_n(A)\},
  \end{align*}
 where
  \begin{align*}
    c_j(A):=A_{jj}+\sum_{ \substack { 1\leq i \leq n\\  i \not = j} } |A_{ij}|  ,
  \end{align*}
  i.e.,   the sum of the entries in column~$j$ of~$A$,
 with non-diagonal
elements   replaced by their absolute values.
Matrix measures satisfy several useful properties (see, e.g.~\cite{vid,Desoer_cont}). 
We list here two properties that will be used later on:
\begin{align*} 
							\mu(A+B)&\leq \mu(A)+\mu(B) ,   &\text{(subadditivity)},  \\
							\mu(cA )&=c\mu(A) \text{ for all }c\geq 0 ,   &\text{(homogeneity)}. 
\end{align*}

If the  Jacobian of~$f$ satisfies 
\be\label{eq:conmm}
\mu(J(t,x))\leq - \eta,\quad \text{for all }x\in\Omega \text{ and all } t\geq  t_0 \geq 0,
\ee
 then~\eqref{eq:contdef} holds (see~\cite{entrain2011} for a self-contained proof).
This is in fact a particular case of using a Lyapunov-Finsler
 function to prove contraction~\cite{contra_sep}.
We will focus on the case where~$\eta>0$, but some of  our results hold when~$\eta\leq 0$ as well.
In this case,~\eqref{eq:contdef} provides a bound on how quickly can trajectories of~\eqref{eq:fdyn}
 separate from one another. 

Often it is useful to work with scaled vector norms (see, e.g.~\cite{sandberg78, Coogan:2016kx}).
 Let~$|\cdot |_*:\R^n\to\R_+$ be some  vector norm, and let~$\mu_*:\R^{n\times n}\to\R$
denote its induced matrix measure. If~$D\in\R^{n\times n}$ is an invertible matrix, and
$|\cdot|_{*,D}  : \R^n \to \R_+$ is the vector norm defined by
$|z|_{*,D}:=|D  z|_* $, then the
induced matrix measure is $
                    \mu_{*,D}(A) = \mu_*(DAD^{-1}).$
For example, the matrix measure induced by the Euclidean norm~$|\cdot|_2$ is
$\mu_2(A)=\max \{\lambda: \lambda \text{ is an eigenvalue of } (A+A')/2\}$, so
\begin{align}\label{eq:scnorm}
\mu_{2,D}(A)&= \mu_{2 }(DAD^{-1})\\
\nonumber            &=\max \{\lambda: \lambda \text{ is an eigenvalue of } (  DAD^{-1}+(DAD^{-1})'   )/2\}.
\end{align}

The next result describes an ISS property of contractive systems with an additive input. 
\begin{Theorem}~\cite{Desoer_cont}
\label{thm:1}
Consider the system
\be\label{eq:withcont}
\dot x(t)=f(t,x(t))+u(t),
\ee
where~$y\to f(t,y)$ is~$C^1$ for all~$t\geq t_0$.
Fix some vector norm~${|\cdot|}:\R^n\to\R_+$ and suppose that~\eqref{eq:conmm}  holds for
 the induced matrix measure~$\mu(\cdot)$. Then the solution of~\eqref{eq:withcont}
with~$x(t_0)=x_0$ satisfies
\begin{align*}
   |x(t,t_0,x_0)|\leq e^{-\eta(t-t_0)}|x_0|+\int_{t_0}^t  e^{-\eta(t-s)} |u(s) | \,ds
\end{align*}
for all~$t\geq t_0$. 
\end{Theorem}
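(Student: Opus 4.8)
The plan is to reduce the claim to a scalar differential inequality for $V(t):=|x(t,t_0,x_0)|$ and then integrate it with the comparison lemma. Since the composition of the norm with an absolutely continuous trajectory need not be differentiable (the norm is only Lipschitz), I would work throughout with the upper right Dini derivative $D^+V(t):=\limsup_{h\downarrow 0}\bigl(V(t+h)-V(t)\bigr)/h$, which agrees with the ordinary derivative wherever the latter exists and obeys the same comparison principle.

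The heart of the argument is the local estimate
\[
D^+V(t)\leq -\eta\,V(t)+|u(t)|.
\]
To obtain it, first use $\dot x(t)=f(t,x(t))+u(t)$ to write $x(t+h)=x(t)+h\bigl(f(t,x(t))+u(t)\bigr)+o(h)$. Next I would linearize $f$ along the segment from the (origin) equilibrium to $x(t)$ via the fundamental theorem of calculus, $f(t,x(t))=\bar J(t)\,x(t)$ with $\bar J(t):=\int_0^1 J\bigl(t,s\,x(t)\bigr)\diff s$, so that $x(t)+h\,f(t,x(t))=\bigl(I+h\bar J(t)\bigr)x(t)$. Applying the triangle inequality and the compatibility bound $|(I+h\bar J(t))x(t)|\leq \|I+h\bar J(t)\|\,|x(t)|$ gives
\[
\frac{V(t+h)-V(t)}{h}\leq \frac{\|I+h\bar J(t)\|-1}{h}\,V(t)+|u(t)|+\frac{o(h)}{h},
\]
and letting $h\downarrow 0$ turns the first coefficient into exactly the matrix measure $\mu(\bar J(t))$ by its definition.

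It then remains to show $\mu(\bar J(t))\leq-\eta$. Here I would exploit that subadditivity and homogeneity make $\mu$ sublinear, so that applied to Riemann sums and passing to the limit yields $\mu\bigl(\int_0^1 J(t,s\,x(t))\diff s\bigr)\leq \int_0^1 \mu\bigl(J(t,s\,x(t))\bigr)\diff s$. Convexity of $\Omega$ guarantees that each point $s\,x(t)$ lies in $\Omega$, so the hypothesis $\mu(J(t,\cdot))\leq-\eta$ makes every integrand at most $-\eta$, whence $\mu(\bar J(t))\leq-\eta$ and the local estimate follows. Finally, the comparison lemma bounds $V(t)$ by the solution $w(t)$ of $\dot w=-\eta w+|u(t)|$, $w(t_0)=|x_0|$; solving this scalar linear equation with the integrating factor $e^{\eta t}$ reproduces the right-hand side of the claim exactly.

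The step I expect to be most delicate is the local estimate: one must control the $o(h)$ term uniformly enough to interchange the limit with the inequality, and must justify both the Dini-derivative comparison principle and the passage of the matrix measure through the integral defining $\bar J(t)$. The linearization also tacitly places the unforced equilibrium at the origin, so that $f(t,0)=0$ and the bound carries no extra constant term; absent this normalization the same computation produces an additional $|f(t,0)|$ contribution.
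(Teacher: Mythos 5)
Your proof is correct and follows the same route the paper itself relies on: the paper states this theorem without proof (citing Desoer--Haneda), but its derivation of \eqref{eq:err_bnd} immediately afterwards uses precisely your devices --- the integral mean-value form of the Jacobian $\int_0^1 J(t,sx+(1-s)y)\,\diff s$, subadditivity of $\mu$ extended to integrals by continuity, and the Dini-derivative/comparison step built on the definition of the matrix measure as the one-sided derivative of $h\mapsto \|I+hA\|$ at $h=0$. Your closing caveat is also on target and worth keeping: as stated the bound tacitly assumes $f(t,0)=0$ (and your segment $\{s\,x(t):s\in[0,1]\}\subseteq\Omega$ additionally needs $0\in\Omega$), without which a term $|f(s,0)|$ must be added under the integral (e.g.\ $\dot x=-x+1$, $u\equiv 0$ violates the stated bound); this is harmless in context because the paper only ever applies the theorem to error dynamics $\dot d=M(t)d+u(t)$, whose vector field vanishes at the origin.
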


Ref.~\cite{lars16} has applied the ISS property to derive a bound on the error between trajectories of the contractive system~\eqref{eq:fdyn}
and those of a 
 ``simpler'' dynamical system~$\dot y=g(t,y(t))$.
For such a system, pick~$y_0\in\Omega$, and let~$\tau \geq t_0$ be such that the solution~$y(t,t_0,y_0)$ belongs to~$\Omega$
for all~$t\in[t_0,\tau]$. Then
 the difference between the trajectories of the 
two systems
$
d(t):=x(t,t_0,x_0)-y(t,t_0,y_0)
$
satisfies 
 \begin{align}\label{eq:err_bnd}
			|d(t  )| &\leq  e^{-\eta (t-t_0)}|x_0-y_0| +\int_{t_0}^t e^{-\eta (t-s)}  |  f(s,y(s,t_0,y_0))-g(s,y(s,t_0,y_0))   | \diff s 
\end{align}
for all~$t\in [t_0,\tau]$. 
The proof of this result is based on noting that
\begin{align*} 
\dot d(t) 
              &=f(t,x(t))-f(t,y(t))+f(t,y(t))-g(t,y(t)) \nonumber\\
              &=M(t)d+u(t),\nonumber 
\end{align*}
where~$M(t):=\int_0^1 J(t,sx(t)+(1-s)y(t)) \diff s$, and~$u(t):=f(t,y(t))-g(t,y(t))$.
Since~$y(t) \in \Omega$ for all~$t\in[0,\tau]$ 
and~$\Omega$ is convex,~$ sx(t)+(1-s)y(t) \in \Omega$ for all~$t\in[0,\tau]$ and all~$s \in [0,1]$. 
Using~\eqref{eq:conmm} and the 
subadditivity of matrix measures, which, by continuity, extends
to integrals  yields
$\mu(M(t))\leq-\eta$ for all~$t\in[0,\tau]$. Summarizing, $\dot d(t) =M(t)d(t)+u(t)$ is a contractive system with an additive ``disturbance''~$u$ and applying 
   the~ISS property of contractive systems   yields~\eqref{eq:err_bnd}.

Note that the integrand in~\eqref{eq:err_bnd}
  depends on the difference between the vector fields~$f$ and~$g$ 
\emph{evaluated along the trajectory of the~$y$ system}. This is useful, for example, when the trajectory of the~$y$ system 
is    explicitly known.

The applications studied in~\cite{lars16} were   
 contractive systems with time-invariant vector fields approximated by time-invariant~LTI systems. 
Here, we consider a different  case, namely,
 when  the vector field~$f(t,x)$ is time-varying and~$T$-periodic for some~$T>0$, that is,
\[
f(t,z)=f(t+T,z)
\]
 for all~$t\geq t_0$ and all~$z\in\Omega$. 
 It is well-known that in this case every trajectory of~\eqref{eq:fdyn}
 converges to a unique periodic solution~$\gamma(t)$ of~\eqref{eq:fdyn} with period~$T$
(see~\cite{ entrain2011} for a self-contained proof). This entrainment
property is very important in applications (see, e.g.~\cite{RFM_entrain,entrain2011}). 
However, 
the proof of entrainment is based on implicit arguments and provides no information on the properties of
the period trajectory (except for its period). 
Our goal here  is to develop a suitable bound   
for the difference between  $\gamma(t)$ and the periodic solution~$\kappa(t)$
 of some simpler 
  approximating~$y$ system, and to suggest suitable approximating systems. We also
	show that these explicit bounds can be used to derive new theoretical results on the response of contractive systems
	to a sinusoidal input. 
	The next three  sections present our main results.
\section{Bounds on the difference between two periodic trajectories} \label{sec:bounds} 
In this section, we consider the $T$-periodic orbit
of a ~$T$-periodic contractive system.
Theorem~\ref{thm:periodbound} below is our main result in this section,
and provides a bound on the distance of this periodic orbit to a $T$-periodic orbit of some approximating system.

\begin{Theorem}  \label{thm:periodbound}
Consider the system
\begin{align}  \label{eq:14}
  \dot{x}=f(t,x)
\end{align}
whose trajectories evolve  
  on a  compact and convex state-space~$\Omega\subseteq \mathbf{\R}^n$. Suppose that~$f(t,x)$ is $T$-periodic and that~$f(t,x)$ and~$J(t,x)$ are continuous in $(t,x)$. Let $|\cdot|$ be some vector norm on $\mathbb{R}^n$ and $\mu(\cdot)$ its induced matrix measure, and suppose that~$\mu(J(t,x))\leq -\eta<0$   for all $t\geq 0$ and all~$x\in\Omega$. Let $\gamma(t)$ be the unique periodic trajectory of \eqref{eq:14} with period $T$. Consider another time-varying system
\begin{align} \label{eq:13}
\dot{y}=g(t,y)
\end{align}
and suppose that $g(t,y)$ is also $T$-periodic and that~$\kappa(t)$ is a $T$-periodic trajectory of \eqref{eq:13} 
with~$\kappa(t)\in \Omega$ for all~$t\in[0,T]$. Define~$c:\R_+ \to \R_+$ by
\begin{align}
  \label{eq:21}
  c(\alpha):=\int_{0}^\alpha e^{-\eta(\alpha-s)}|f(s,\kappa(s))-g(s,\kappa(s))| \diff s.
\end{align}
Then  the difference between the two periodic trajectories satisfies
\begin{align} \label{eq:16}
|\gamma(\tau)-\kappa(\tau)|\leq \frac{ e^{-\eta \tau}   }{1-e^{-\eta T}} c(T)  +c(\tau)
\end{align}
  for all $\tau\in[0,T]$.
\end{Theorem}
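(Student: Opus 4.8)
The plan is to apply the ISS-based error bound \eqref{eq:err_bnd} to the two \emph{periodic} trajectories and then exploit periodicity to eliminate the one remaining free quantity, namely the separation of the initial conditions. First I would specialize \eqref{eq:err_bnd} by taking $t_0=0$, setting the $x$-trajectory to be $\gamma$ (so that $x_0=\gamma(0)$) and the $y$-trajectory to be $\kappa$ (so that $y_0=\kappa(0)$). Since $\kappa$ is $T$-periodic and $\kappa(t)\in\Omega$ for all $t\in[0,T]$, by periodicity $\kappa(t)\in\Omega$ for all $t\geq 0$, so \eqref{eq:err_bnd} is valid on all of $[0,T]$. With this choice the difference-of-vector-fields term in \eqref{eq:err_bnd} is evaluated along $\kappa$, and therefore coincides exactly with the integrand defining $c$ in \eqref{eq:21}. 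This yields
\begin{align*}
|\gamma(t)-\kappa(t)|\leq e^{-\eta t}|\gamma(0)-\kappa(0)|+c(t)
\end{align*}
for all $t\in[0,T]$.

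The crucial step is to bound the initial separation $|\gamma(0)-\kappa(0)|$, which is the only quantity in the display above not already controlled by $c$. Here I would close the loop using $T$-periodicity: since $\gamma(T)=\gamma(0)$ and $\kappa(T)=\kappa(0)$, evaluating the displayed inequality at $t=T$ gives
\begin{align*}
|\gamma(0)-\kappa(0)|=|\gamma(T)-\kappa(T)|\leq e^{-\eta T}|\gamma(0)-\kappa(0)|+c(T).
\end{align*}
Because $\eta>0$ we have $1-e^{-\eta T}>0$, so this self-referential inequality rearranges to $|\gamma(0)-\kappa(0)|\leq c(T)/(1-e^{-\eta T})$.

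Substituting this bound on $|\gamma(0)-\kappa(0)|$ back into the first display, evaluated at $t=\tau$, produces
\begin{align*}
|\gamma(\tau)-\kappa(\tau)|\leq \frac{e^{-\eta\tau}}{1-e^{-\eta T}}c(T)+c(\tau)
\end{align*}
for all $\tau\in[0,T]$, which is exactly \eqref{eq:16}. The argument is short because \eqref{eq:err_bnd} does most of the work; the only genuine verification I anticipate is confirming the hypotheses under which \eqref{eq:err_bnd} applies, namely that both trajectories remain in the convex set $\Omega$ so that the mean-value matrix $M(t)$ inherits the bound $\mu(M(t))\leq-\eta$. For $\kappa$ this is given by assumption, and for $\gamma$ it follows from positive invariance of $\Omega$. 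The conceptual heart of the proof is the observation that periodicity converts the transient ISS estimate into a bound on the steady-state orbit by pinning down the otherwise-free initial displacement.
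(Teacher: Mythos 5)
Your proof is correct, but it follows a genuinely different route from the paper's. Both arguments start from the same ISS inequality (the paper invokes Theorem~\ref{thm:1} directly, you invoke the bound \eqref{eq:err_bnd}, which is the same estimate); the difference lies in how the initial separation $|\gamma(0)-\kappa(0)|$ is eliminated. The paper works on an arbitrarily long horizon: writing $t=kT+\tau$, it uses the $T$-periodicity of $u(s):=f(s,\kappa(s))-g(s,\kappa(s))$ to split the integral $\int_0^t e^{-\eta(t-s)}|u(s)|\,\diff s$ into $k$ full-period blocks plus a remainder, sums the resulting geometric series to get $e^{-\eta\tau}c(T)\frac{1-e^{-k\eta T}}{1-e^{-\eta T}}+c(\tau)$, and then lets $k\to\infty$ so that the term $e^{-\eta(kT+\tau)}|\gamma(0)-\kappa(0)|$ decays away. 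You instead apply the estimate only on the single period $[0,T]$ and remove the initial separation algebraically: evaluating at $t=T$ and using $\gamma(T)=\gamma(0)$, $\kappa(T)=\kappa(0)$ gives the self-referential inequality $|\gamma(0)-\kappa(0)|\le e^{-\eta T}|\gamma(0)-\kappa(0)|+c(T)$, which you may solve because $1-e^{-\eta T}>0$, and then substitute back at $t=\tau$. Your bootstrap is shorter and more elementary---no integral decomposition, no geometric series, no limit---and the factor $1/(1-e^{-\eta T})$ emerges from solving a one-period fixed-point inequality rather than from summing $\sum_{i\ge 0}e^{-\eta iT}$. What the paper's longer argument buys is the dynamical interpretation it records immediately after the proof: the bound is exactly the error level attained in the limit $t\to\infty$, which is what motivates the subsequent example showing that \eqref{eq:16} cannot in general be improved. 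Your verification of the hypotheses---that both trajectories remain in the convex set $\Omega$, so the mean-value matrix $M(t)$ inherits $\mu(M(t))\le-\eta$---is exactly what is required and matches the justification given in the paper for \eqref{eq:err_bnd}.
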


Note that the bound here depends on the difference between the vector fields~$f$ and~$g$ evaluated
 along the periodic trajectory~$\kappa(s)$
of the ``simpler'' $y$ system. This is useful for example when the~$y$ system is an asymptotically stable 
LTI system with a sinusoidal forcing term, as then~$\kappa(t)$ is known explicitly. 

\begin{proof}
Define the $T$-periodic function $  u(t):= f(t,\kappa(t))-g(t,\kappa(t))$. 
For any~$t\geq 0$, 
 Theorem \ref{thm:1} gives
  \begin{align} \label{eq:17}
    |\gamma(t)-\kappa(t)|\leq e^{-\eta t}|\gamma(0)-\kappa(0)| +\int_{0}^t e^{-\eta (t-s)}  | u(s)   | \diff s.
  \end{align}
 There exist~$\tau\in[0,T)$ and a non-negative integer~$k $ such that~$t=kT+\tau$.
  Observe that~$ \gamma(t)-\kappa(t) =     \gamma(\tau)-\kappa(\tau) $.
Write the integral on the right-hand side of~\eqref{eq:17} as 
\begin{align*} 
 \int_{0}^{t} e^{-\eta (t-s)}  |u(s) | \diff s &=  e^{-\eta \tau}\int_{0}^{kT+\tau} e^{-\eta (kT-s)}  |  u(s)  | \diff s \\
\nonumber&=e^{-\eta \tau}\sum_{i=0}^{k-1}\int_{iT}^{(i+1)T}e^{-\eta (kT-s)}  |  u(s)  | \diff s+\int_0^\tau e^{-\eta(\tau-p)}|u(p)|\diff p .
\end{align*}
Moreover,
\begin{align*}
  \sum_{i=0}^{k-1}\int_{iT}^{(i+1)T}e^{-\eta (kT-s)}  |  u(s)  | \diff s & =c(T) \frac{1-e^{-\eta k }}{1-e^{-\eta T} }
\end{align*}
so that
\begin{align*}  
|\gamma(\tau)-\kappa(\tau)|&  = |\gamma(t)-\kappa(t)|\leq  e^{-\eta (kT+\tau)}|\gamma(0)-\kappa(0)|+   e^{-\eta \tau} c (T)   \frac{1-e^{-k\eta T}}{1-e^{-\eta T} }   +c(\tau)  
\end{align*}
for all $\tau\in[0,T)$. Taking $k\to \infty$ completes  the proof.
\end{proof}

Note that the bound is actually based on taking the time~$t\to \infty$. This is possible because we are considering the
difference between two   periodic trajectories.

The next example is important, as it
shows that in general the bound~\eqref{eq:16}  cannot be improved.
\begin{Example}
Consider the scalar system
\be\label{eq:scals}
\dot x=f(t,x):=-x+ 1+\sin(2\pi t/T),
\ee
 with~$T>0$. Note that~$\Omega:=[0,2]$ is an invariant set of this dynamics, and 
  that~$f$ is~$T$-periodic. The Jacobian of~$f$ is~$J(x)=-1$, so for any vector norm the induced matrix measure satisfies~$\mu(J(x))=-1$. 
	For any initial condition  the solution of~\eqref{eq:scals} converges to the $T$-periodic trajectory:
	\be\label{eq:yhngg}
	\gamma(t)=1+\frac{T^2\sin(2\pi t/T)-2\pi T \cos(2\pi t/T)}{4\pi^2+T^2}.
		\ee
	
	Consider the approximating system~$\dot y=-y$, which is (vacuously)  $T$-periodic, and admits
	the~$T$-periodic solution~$\kappa (t)\equiv 0$, that belongs to~$\Omega$ for all~$t$. 
In this case, \eqref{eq:21} yields 
\begin{align*}
  c(\alpha)&=\int_{0}^\alpha e^{- (\alpha-s)}|1+\sin(2\pi s/T) | \diff s\\
\nonumber	&=	 1-e^{-\alpha}+\frac{2\pi T e^{-\alpha}-2\pi T \cos(2\pi \alpha/T)+T^2\sin(2\pi\alpha/T)}{4\pi^2+T^2}.
\end{align*}
Thus for large values of~$\alpha$,
\begin{align}\label{eq:clatggv}
  c(\alpha)&\approx  
	 \frac{ -2\pi T \cos(2\pi \alpha/T)+T^2\sin(2\pi\alpha/T)}{4\pi^2+T^2} +1 .
\end{align}
Now consider the case where~$T\to\infty$ and~$\tau=T-\varepsilon$, with~$\varepsilon>0$ and very small.
Then~\eqref{eq:yhngg} implies that
 the term on the left-hand side of~\eqref{eq:16} is
 \[
	|\gamma(\tau) | \approx 1+\sin(2\pi (T-\varepsilon)/ T),   
 \]
whereas~\eqref{eq:clatggv} implies that 
 the term on the right-hand side of~\eqref{eq:16} is
	\begin{align*}
	\frac{ e^{-\eta \tau}   }{1-e^{-\eta T}} c(T)  +c(\tau)&\approx c(\tau)\\
   &\approx  
	   1+\sin(2\pi (T-\varepsilon)/T) .
\end{align*}
Thus, this example shows that in general the bound~\eqref{eq:16}  cannot be improved.
\qedwhite
\end{Example}

We now derive a simpler (and less tight) bound. 
By the definition of~$c(\cdot)$,
\begin{align*}
   c(\alpha) \leq    \frac{1-e^{-\eta \alpha} }{\eta}  \max_{t\in[0,\alpha] } |f(t,\kappa(t))-g(t,\kappa(t))|.
\end{align*}
for all~$\alpha\geq 0$, and combining this 
with~\eqref{eq:16} yields  the following result.
\begin{Corollary}
\label{cor:maincor}
Under the hypotheses of Theorem \ref{thm:periodbound},
\begin{align*}  
  |\gamma(\tau)-\kappa(\tau)|\leq \frac{ 1}{\eta}
		\max_{t\in[0,T] } |f(t,\kappa(t))-g(t,\kappa(t))|  
\end{align*}
for all~$\tau\geq 0$.
\end{Corollary}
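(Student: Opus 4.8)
The plan is to derive Corollary~\ref{cor:maincor} as a direct consequence of Theorem~\ref{thm:periodbound} by bounding the integral defining $c(\cdot)$ and then simplifying the resulting expression in~\eqref{eq:16}. First I would establish the pointwise bound on $c(\alpha)$ already sketched in the paragraph preceding the corollary: starting from the definition~\eqref{eq:21}, I replace the integrand's factor $|f(s,\kappa(s))-g(s,\kappa(s))|$ by its maximum over $s\in[0,\alpha]$, pull this constant outside the integral, and evaluate $\int_0^\alpha e^{-\eta(\alpha-s)}\diff s = (1-e^{-\eta\alpha})/\eta$. This yields
\begin{align*}
c(\alpha)\leq \frac{1-e^{-\eta\alpha}}{\eta}\max_{t\in[0,\alpha]}|f(t,\kappa(t))-g(t,\kappa(t))|.
\end{align*}

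Next I would substitute this estimate into the two occurrences of $c(\cdot)$ in the right-hand side of~\eqref{eq:16}. Writing $N:=\max_{t\in[0,T]}|f(t,\kappa(t))-g(t,\kappa(t))|$, and noting that for $\tau\in[0,T]$ the maximum over $[0,\tau]$ is bounded by the maximum over $[0,T]$, I get $c(T)\leq (1-e^{-\eta T})N/\eta$ and $c(\tau)\leq (1-e^{-\eta\tau})N/\eta$. Plugging these into~\eqref{eq:16} gives
\begin{align*}
|\gamma(\tau)-\kappa(\tau)|\leq \frac{e^{-\eta\tau}}{1-e^{-\eta T}}\cdot\frac{(1-e^{-\eta T})N}{\eta}+\frac{(1-e^{-\eta\tau})N}{\eta}=\frac{N}{\eta}\left(e^{-\eta\tau}+1-e^{-\eta\tau}\right)=\frac{N}{\eta}.
\end{align*}
The factor $1-e^{-\eta T}$ cancels cleanly in the first term, and the $e^{-\eta\tau}$ terms combine to leave exactly $N/\eta$, which is the claimed bound.

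The only genuine subtlety, and the step I would be most careful about, is the extension from $\tau\in[0,T]$ (the range in Theorem~\ref{thm:periodbound}) to all $\tau\geq 0$ as stated in the corollary. This is resolved by $T$-periodicity: both $\gamma$ and $\kappa$ are $T$-periodic, so $|\gamma(\tau)-\kappa(\tau)|$ is itself a $T$-periodic function of $\tau$, and hence the bound for $\tau\in[0,T]$ automatically propagates to every $\tau\geq 0$. I would state this observation explicitly rather than leave it implicit, since the final ``for all $\tau\geq 0$'' quantifier depends on it. Beyond this, the argument is entirely routine: it is a matter of monotonicity of the maximum over nested intervals and the algebraic cancellation above, with no analytic obstacle.
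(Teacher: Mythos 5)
Your proposal is correct and follows exactly the paper's route: the paper proves the corollary by the same bound $c(\alpha)\leq \frac{1-e^{-\eta\alpha}}{\eta}\max_{t\in[0,\alpha]}|f(t,\kappa(t))-g(t,\kappa(t))|$ substituted into~\eqref{eq:16}, with the identical cancellation of the $1-e^{-\eta T}$ and $e^{-\eta\tau}$ factors. Your explicit handling of the extension from $\tau\in[0,T]$ to all $\tau\geq 0$ via $T$-periodicity of $\gamma-\kappa$ is a point the paper leaves implicit, and it is a worthwhile addition rather than a deviation.
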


This bound is useful in cases where one can  establish a bound on  the 
difference between the vector fields~$f$ and~$g$
 along the periodic trajectory~$\kappa$ of the approximating system.
Note that the bound here demonstrates a clear tradeoff: if~$g$ is ``close'' to~$f$ then the error~$f-g$ will
be small, yet~$\kappa$ may be an unknown 
 complicated trajectory (as we assume that~$f$ is a nonlinear vector filed). 
On the other hand, if~$g$ is relatively simple (e.g., the vector field of an~LTI system) then 
$\kappa$ may be known explicitly yet that difference~$|f-g|$ may be large.

To summarize,  Theorem \ref{thm:periodbound} and Corollary \ref{cor:maincor} provide a bound on the distance of the unique~$T$-periodic trajectory of a contractive  system 
 and some $T$-periodic trajectory of an  approximating system. The next step is 
to determine  a suitable approximating system. 
We propose two natural approximating systems for the
case where  the periodic vector field arises via a periodic forcing function.
 The first approximating system considers the time-averaged periodic forcing function to arrive at an autonomous dynamical system with 
a unique equilibrium. The second approximating system results from a linearization of the dynamics, keeping the periodic excitation as is.

\section{Approximating Systems} \label{sec:approx}
From hereon, we consider a special case of the contractive system~\eqref{eq:14} with the form
\begin{align*}
  \dot{x}(t)= f(t,x(t))=F(x(t),u(t))
\end{align*}
where $u(t)$ is a given $m$-dimensional, $T$-periodic excitation.

\subsection{Averaging the input} 
Our first result is based  
on using a ``simpler''~$y$ system derived by averaging   the  excitation~$u$ over a period. 
The excitation in the~$y$ system
 is thus constant.
We assume that the $y$ system admits an equilibrium point~$e\in\Omega$, and 
apply Theorem~\ref{thm:periodbound} to derive a 
  bound on the distance between the periodic trajectory~$\gamma(t)$
	of the original~$x$ system and the point~$e$.

\begin{Theorem}\label{thm:averexci}
Consider the system
\be\label{eq:capitalfdynu}
\dot x=F(x,u),
\ee
where~$u$ is  an~$m$-dimensional 
 periodic excitation  with period~$T\geq 0$. 
Suppose that the trajectories of~\eqref{eq:capitalfdynu} evolve on a compact and convex state space~$\Omega\subset\R^n$. 
 Assume that for some vector norm~$|\cdot|:\R^n\to\R_+$ and  induced
 matrix measure~$\mu:\R^{n\times n }\to\R$,
 \begin{align*}
				\mu\left(\frac{\partial F}{\partial x} (x,u(t)) \right)\leq -\eta<0   
 \end{align*}
for all~$t\geq 0$ and all~$x \in \Omega$. %
Let $\gamma(t)$ be the unique, attracting, $T$-periodic orbit of~\eqref{eq:capitalfdynu} in~$\Omega$. Then, for any $z\in \Omega$,
 \begin{equation}
\label{eq:err_bnd_per}
			 |x(t,0,z)-z | \leq   
			\int_{ 0}^t e^{-\eta (t-s)}  |  F(z,u(s))   | \diff s  
 \end{equation}
for all~$t\geq 0$. In particular,  for all~$t\geq 0$,
\begin{align*}
			 |x(t,0,z)-z | \leq    (1-e^{-\eta t})c/{\eta},  
\end{align*}
 where~$c:=\max_{t\in[0,T]}|  F(z,u(t))   | $.
Moreover, for all $\tau\in[0,T]$, 
\begin{align}
\label{eq:20}|\gamma(\tau)-z|&\leq \frac{e^{-\eta \tau}}{1-e^{-\eta T}}\int_0^T e^{-\eta(T-s)}|F(z,u(s))|\diff s+\int_0^\tau e^{-\eta(\tau-s)}|F(z,u(s))|\diff s \\
\label{eq:20-2}   &\leq c/\eta.
\end{align}

\end{Theorem}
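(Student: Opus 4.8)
The plan is to recognize that all four claims are specializations of machinery already established, obtained by choosing the degenerate approximating system~\eqref{eq:13} with $g\equiv 0$ together with the constant map $\kappa(t)\equiv z$. Since $g\equiv 0$ is (vacuously) $T$-periodic and the constant $z\in\Omega$ is trivially a $T$-periodic trajectory lying in $\Omega$, this is a legitimate choice in both Theorem~\ref{thm:periodbound} and Corollary~\ref{cor:maincor}. The hypotheses transfer directly: writing $f(t,x)=F(x,u(t))$ gives $J(t,x)=\frac{\partial F}{\partial x}(x,u(t))$, so the assumed bound $\mu\!\left(\frac{\partial F}{\partial x}(x,u(t))\right)\leq-\eta$ is exactly the contraction condition~\eqref{eq:conmm} required earlier. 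With this choice the vector-field mismatch $f(s,\kappa(s))-g(s,\kappa(s))$ collapses to $F(z,u(s))$, the drift of the actual system frozen at $z$.

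For the first bound~\eqref{eq:err_bnd_per} I would apply the error estimate~\eqref{eq:err_bnd} (equivalently Theorem~\ref{thm:1}) with $t_0=0$ and initial conditions $x_0=y_0=z$. The $y$-trajectory is the constant $z$, so the term $e^{-\eta t}|x_0-y_0|$ vanishes and the integrand becomes $|F(z,u(s))|$, yielding~\eqref{eq:err_bnd_per} at once. The simpler bound then follows by dominating $|F(z,u(s))|\leq c$ inside the integral, where $c=\max_{t\in[0,T]}|F(z,u(t))|$ is finite and, by $T$-periodicity of $u$, equals the supremum over all $s\geq 0$; integrating $\int_0^t e^{-\eta(t-s)}\diff s=(1-e^{-\eta t})/\eta$ gives the stated $(1-e^{-\eta t})c/\eta$.

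For~\eqref{eq:20} I would invoke Theorem~\ref{thm:periodbound} with the same choice $g\equiv 0$, $\kappa\equiv z$. The auxiliary function~\eqref{eq:21} then reduces to $c(\alpha)=\int_0^\alpha e^{-\eta(\alpha-s)}|F(z,u(s))|\diff s$, and substituting $c(T)$ and $c(\tau)$ into the periodic bound~\eqref{eq:16} reproduces the two integral terms of~\eqref{eq:20} verbatim. Finally~\eqref{eq:20-2} follows from Corollary~\ref{cor:maincor}, whose right-hand side specializes to $\frac{1}{\eta}\max_{t\in[0,T]}|F(z,u(t))|=c/\eta$; alternatively one could dominate each integrand in~\eqref{eq:20} by $c$ and sum the resulting geometric factor, but invoking the corollary is cleaner.

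Because everything reduces to a substitution into results proved earlier, there is no genuine analytic obstacle. The one point deserving explicit mention — which I would state before the substitutions — is the verification that a constant map qualifies as a $T$-periodic trajectory of the trivial system $\dot y=0$ and that $\kappa(t)\equiv z$ remains in $\Omega$, since this is precisely what licenses reusing Theorem~\ref{thm:periodbound} and Corollary~\ref{cor:maincor} off the shelf. I would also note in passing that convexity of $\Omega$ is what guarantees the mean-value Jacobian $M(t)=\int_0^1 J(t,sx(t)+(1-s)z)\diff s$ inherits $\mu(M(t))\leq-\eta$, as $sx(t)+(1-s)z\in\Omega$; but this is already embedded in the derivation of~\eqref{eq:err_bnd} and need not be repeated.
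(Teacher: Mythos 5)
Your proposal is correct and follows exactly the paper's own route: the paper likewise defines the trivial approximating system $\dot y = G(y)\equiv 0$ (so that any $z\in\Omega$ is an equilibrium, i.e.\ $\kappa(t)\equiv z$ is a $T$-periodic trajectory in $\Omega$) and then obtains~\eqref{eq:err_bnd_per} from~\eqref{eq:err_bnd} and the bounds~\eqref{eq:20},~\eqref{eq:20-2} from Theorem~\ref{thm:periodbound} and Corollary~\ref{cor:maincor}. Your write-up simply spells out the substitutions that the paper's one-line proof leaves implicit.
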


\begin{proof}
Define the approximating system $\dot{y}=G(y)\equiv 0$ so that any $z\in\Omega$ is an equilibrium. Then~\eqref{eq:err_bnd_per}    follows  from 
 the bound \eqref{eq:err_bnd}, and 
the bounds~\eqref{eq:20} and \eqref{eq:20-2}   follow from Theorem~\ref{thm:periodbound} and Corollary~\ref{cor:maincor}.
\end{proof}

The following simple example demonstrates a case where
the bounds in Theorem~\ref{thm:averexci} are tight.
\begin{Example}
  Consider the scalar system~$\dot{x}=F(x,u):=-ax+b$ with~$a>0$. Then~$\gamma(t)\equiv b/a=:e$ is a periodic trajectory. 
	This system is contracting with rate~$\eta=a$. The bound~\eqref{eq:20-2} gives $|e-z|\leq |-az+b|/a=|e-z|$ for any $z\in\mathbb{R}$ so that this bound is tight.\qedwhite
\end{Example}

The next two examples  demonstrate that a natural choice for $z$ in Theorem~\ref{thm:averexci} is the equilibrium point induced by the average of the periodic excitation.
\begin{Example}\label{exa:linear}
Our focus here is on nonlinear dynamical systems, but it is still useful  to begin by considering  the  
  linear   system
\be\label{eq:linab}
			\dot x = A x+ B u,
\ee
where~$A\in\R^{n\times n}$ is Hurwitz, $B\in\R^{n\times m}$, and~$u$ is an $m$-dimensional
 $T$-periodic control.
It is well-known that such a system is contractive~\cite{sontag_cotraction_tutorial,strom_log_norms}. For the sake of completeness we repeat the argument here. 
We use the notation~$Q>0$ to  denote that a matrix~$Q$ is symmetric and positive-definite. Since~$A$ is Hurwitz, there exist~$\eta >0$ and~$Q>0$ such that
\be\label{eq:paq}
					QA+A'Q\leq -2\eta Q.
\ee
Let~$P>0$ be a  matrix such that~$P^2=Q$. Then multiplying~\eqref{eq:paq} by~$P^{-1}$ on the left and on the right yields
\be\label{eq:pminleft}
		PAP^{-1}+P^{-1}A' P \leq -2\eta I.
\ee
This means that the Jacobian~$A$ of~\eqref{eq:linab} satisfies~$\mu_{2,P}(A)\leq -\eta$,
where~$\mu_{2,P}$ is the matrix measure induced by the scaled Euclidean norm~$|z|_{2,P}:=|Pz|_2$ (see~\eqref{eq:scnorm}).
Thus,~\eqref{eq:linab} is contractive with respect to this scaled norm with contraction rate~$ \eta$, 
and every solution of~\eqref{eq:linab} converges to the unique  $T$-periodic solution~$\gamma(t)$ of~\eqref{eq:linab}.
Let~$\bar u:=\frac{1}{T}\int_0^T u(s)ds$ and choose $z=A^{-1}B\bar{u}=:e$, the equilibrium of the time-invariant system with input equal to $\bar{u}$.
To apply the bound~\eqref{eq:20-2}, note that 
\begin{align*}
				F(e,u(s))=Ae+B u(s)=B(u(s)-\bar u) .  
\end{align*}
Thus,~$|  F(e,u(s))   |_{2,P}= \left(  (u(s)-\bar u) ' B'P'PB (u(s)-\bar u)  \right)^{1/2} $,
and  the  bound~\eqref{eq:20-2} yields
\begin{align} \label{eq:blins}
  |\gamma(\tau)-e|_{2,P}
   &\leq \frac{1}{\eta}  \max_{t\in[0,T]}    \left(  (u(t)-\bar u) ' B'P'PB (u(t)-\bar u)  \right)^{1/2}
\end{align}
for all $\tau\in[0,T]$.

Of course, for linear systems   the periodic solution corresponding to sinusoidal excitations 
is known explicitly    in terms of the system's
 frequency response. Nevertheless,~\eqref{eq:blins}
seems to be new and provides considerable intuition: the bound on the distance between~$\gamma(t)$ and~$e$ 
decreases when: the contraction rate~$\eta$ increases;  
 the input channel~$B$ becomes ``more orthogonal'' to the matrix~$P$
in~\eqref{eq:pminleft}; or~$\max_{t\in[0,T]} |u(t)-\bar u|$  decreases, that is, the periodic excitation
  becomes more similar to its mean.
\qedwhite
\end{Example} 

 The next example demonstrates an application of
Theorem~\ref{thm:averexci}   for a nonlinear contractive
system.

\begin{Example}\label{exa:rfm}
The ribosome flow model~(RFM)~\cite{reuveni}
is a nonlinear compartmental model describing the unidirectional flow  of   particles 
along a 1D chain  of~$n$ sites using~$n$ non-linear first-order differential equations:
\begin{align}\label{eq:rfmeqns}
                    \dot{x}_1&=\lambda_0 (1-x_1) -\lambda_1 x_1(1-x_2), \nonumber \\
                    \dot{x}_2&=\lambda_{1} x_{1} (1-x_{2}) -\lambda_{2} x_{2} (1-x_3) , \nonumber \\
                    \dot{x}_3&=\lambda_{2} x_{ 2} (1-x_{3}) -\lambda_{3} x_{3} (1-x_4) , \nonumber \\
                             &\vdots \nonumber \\
                    \dot{x}_{n-1}&=\lambda_{n-2} x_{n-2} (1-x_{n-1}) -\lambda_{n-1} x_{n-1} (1-x_n), \nonumber \\
                    \dot{x}_n&=\lambda_{n-1}x_{n-1} (1-x_n) -\lambda_n x_n.
\end{align}
Here~$x_i(t) \in[0,1]$ represents the level of occupancy of site~$i$ at time~$t$, normalized such that~$x_i(t)=1$ [$x_i(t)=0$]
means that site~$i$ is completely full [empty]. 
The state-space is thus~$[0,1]^n$, and this  is an invariant set of~\eqref{eq:rfmeqns} (see~\cite{RFM_entrain}).
The transition
rate~$\lambda_i>0$ controls the flow from site~$i$ to site~$i+1$, with~$\lambda_0$ [$\lambda_n$]
called the initiation [exit] rate. To understand these equations, note that they may be written
as~$\dot{x}_i=g_{i-1}(x)-g_i(x)$, 
where~$g_k(x)$ is the flow from site~$k$ to site~$k+1$ at time~$t$.
This flow increases with~$x_k$ and decreases with~$x_{k+1}$.
In other words, the  flow satisfies a ``soft'' excluded volume 
principle: as site~$k+1$ becomes fuller the flow from site~$k$ to site~$k+1$ decreases. This  models  the fact that the particles have volume and thus cannot overtake one another. The rate at which particles
 leave the chain, that is,
$R(t):=\lambda_n x_n$ is called the  \emph{production rate}.

The RFM with~$n>2$ is actually not contractive in the sense defined above on~$[0,1]^n$, as there exists~$p\in [0,1]^n$
such that~$J(p)$ is singular, 
but it is ``weakly'' contractive in a well-defined sense; see~\cite{cast_book,3gen_cont_automatica}.

Recently, the RFM has been used to model and analyze the flow of ribosomes (the  particles) along groups of codons 
(the sites) along  the mRNA molecule during translation
(see, e.g.~\cite{zarai_infi,RFM_feedback,RFM_stability,RFM_entrain,RFMR,RFM_model_compete_J,RFM_sense,rfm_max_density,rfm_opt_down}). 
In this case, every ribosome that leaves the chain releases the produced protein, so~$R(t)$ is the protein production rate at time~$t$. 
The values of the transition rates depend on various biophysical properties, e.g. the abundance of tRNA  molecules 
 that carry the corresponding amino-acids.

Consider the RFM with~$n=2$ and a \emph{time-varying}
 initiation rate~$u_0(t)$, that is,
\begin{align}\label{eq:rfm2}
					\dot x_1&=  (1-x_1) u_0-\lambda_1 x_1 (1-x_2)\nonumber,\\
					\dot x_2&= \lambda_1 x_1 (1-x_2)-\lambda_2 x_2,
\end{align}
where~$\lambda_1,\lambda_2$ are positive constants. Suppose that~$u_0(t)=\lambda_0+\sin(2\pi t/T)$, 
with~$\lambda_0>1$, $T>0$, i.e. the initiation rate is 
  a strictly positive   periodic function with (minimal) period~$T$. The state space here is~$\Omega:=[0,1]^2$.  
The Jacobian of~\eqref{eq:rfm2} is 
\begin{align*}
			J(t,x)=\begin{bmatrix} -u_0(t)-\lambda_1(1-x_2) &\lambda_1 x_1 \\
			\lambda_1(1-x_2)& -\lambda_1 x_1-\lambda_2 
			\end{bmatrix}.  
\end{align*}

 The off-diagonal terms are non-negative for any~$x\in [0,1]^2$, 
so~$\mu_1(J(t,x))= \max\{  -u_0 (t),-\lambda_2 \}$ for all~$t\geq 0$ and all~$x\in[0,1]^2$. 
Thus, the system  is contractive with respect to the~$\ell_1$ norm with contraction rate~$\eta:=\min\{ \lambda_0-1 ,\lambda_2 \}>0$. This means that it admits a unique periodic solution~$\gamma \in [0,1]^2$,
with period~$T$,  and that every solution converges to~$\gamma$. Entrainment in mRNA translation is important as biological organisms are often exposed to
  periodic excitations, for example the periodic cell-cycle division process.  
	Proper biological functioning  requires entrainment to such excitations~\cite{RFM_entrain}.

Let $\bar u_0=\frac{1}{T} \int_0^T u_0(s) \diff s =\lambda_0$ and consider the system
\begin{align}\label{eq:rfm3}
					\dot y_1&=\lambda_0   (1-y_1)-\lambda_1 y_1 (1-y_2)\nonumber,\\
					\dot y_2&= \lambda_1 y_1 (1-y_2)-\lambda_2 y_2.
\end{align}
This system  admits an equilibrium point
\be\label{eq:e2exp}
			e=\begin{bmatrix} \frac{ \lambda_0\lambda_1-  \lambda_0 \lambda_2 - \lambda_1 \lambda_2 +\sqrt{d}}{2
			 \lambda_0 \lambda_1}  &
				\frac{  \lambda_0\lambda_1 + \lambda_0 \lambda_2 +\lambda_1 \lambda_2-\sqrt{d} }{2\lambda_1 \lambda_2} \end{bmatrix}' \in (0,1)^2,
\ee
where~$d:=    4  \lambda_0^2 \lambda_1 \lambda_2 +( \lambda_0  \lambda_1-\lambda_0
 \lambda_2-\lambda_1 \lambda_2   )^2     $. 

Here,
\begin{align*}
F(e,u(s))=\begin{bmatrix} (\lambda_0+\sin(2\pi s/T)) (1-e_1 )-\lambda_1 e_1 (1-e_2 ) \\
					 \lambda_1 e_1 (1-e_2)-\lambda_2 e_2  \end{bmatrix},  
\end{align*}
and since~$e$ is an equilibrium point of~\eqref{eq:rfm3},
 $F(e,u(s))=\begin{bmatrix}   (1-e_1 )\sin(2\pi s/T) & 0  \end{bmatrix}'$. 
Thus, the bound~\eqref{eq:err_bnd_per} yields
\begin{align}\label{eq:bdee}
			 |x(t,0,e)-e | _1 &\leq   
		  (1-e_1) \int_{ 0}^t e^{-\eta (t-s)}  |   \sin(2\pi s/T)   |  \diff s.
\end{align}
  Likewise, \eqref{eq:20} implies
\begin{align}\label{eq:30}
   |\gamma(\tau)-e|_1&\leq (1-e_1) \frac{e^{-\eta \tau}}{1-e^{-\eta T}}\int_0^T e^{-\eta(T-s)}| \sin(2\pi s/T)  |\diff s+(1-e_1) \int_0^\tau e^{-\eta(\tau-s)}| \sin(2\pi s/T)  |\diff s
\end{align}
for all $\tau$.
  Furthermore, 
\begin{align*}
  |F(e,u(t))|_1=(1-e_1) |   \sin(2\pi t/T)   | \leq 1-e_1
\end{align*}
so  \eqref{eq:20} implies the simpler yet
more conservative bound 
\be\label{eq:symcb}
|\gamma(\tau)-e|_1\leq (1-e_1)/\eta
\ee
  for all $\tau$. 

Note that all the bounds above can be computed \emph{explicitly}. For example, a tedious yet straightforward calculation of the 
integrals in~\eqref{eq:30} yields 
\begin{align*}
  |\gamma(\tau)-e|_1&\leq    \frac{ 2\pi T(1-e_1)  \coth(\eta T/4) } { e^{\eta \tau} (4\pi^2+\eta^2 T^2) } 
	+\frac{T(1-e_1)}{4\pi^2+\eta^2 T^2} r(\tau),
\end{align*}
where
\begin{align*}
\nonumber	&r(\tau):=\begin{cases}
  2\pi e^{-\eta \tau}+\eta T\sin(2\pi\tau/T)-2\pi\cos(2\pi\tau/T)&\text{if }0\leq \tau<T/2,\\
     2\pi e^{-\eta \tau } ( 1+2 e^{\eta T/2} )-\eta T \sin (2\pi \tau/T)+2\pi \cos( 2 \pi \tau/T )&\text{if }T/2 \leq \tau <  T.
 \end{cases}
\end{align*}
Summarizing, in this example, we have an analytical expression both for~$e$ and for the error bounds. Taken together, this
provides considerable explicit information on the periodic trajectory~$\gamma$.

For the case~$\lambda_0=4$, $\lambda_1=1/2$, $\lambda_2=4$, and~$T=2$,
 Fig.~\ref{fig:rfm_cyc} 
depicts the  periodic trajectory of~\eqref{eq:rfm2} and the equilibrium point~$e$ of~\eqref{eq:rfm3}, and
 Fig.~\ref{fig:err_p_bnd}  depicts the error 
			$ |x(t,0,e)-e | _1$ and the bound~\eqref{eq:bdee}. 
			In this case, \eqref{eq:e2exp} yields~$e\approx\begin{bmatrix} 0.8990&
    0.1010 \end{bmatrix}'$ (all numerical values in this note are to four digit accuracy). Fig.~\ref{fig:err_period} illustrates the other
		bounds on the periodic trajectory.
		It may be seen   that these bounds indeed provide
		a reasonable approximation for the~$\ell_1$ distance between
		the unknown periodic trajectory and the point~$e$.  
\qedwhite
\end{Example}

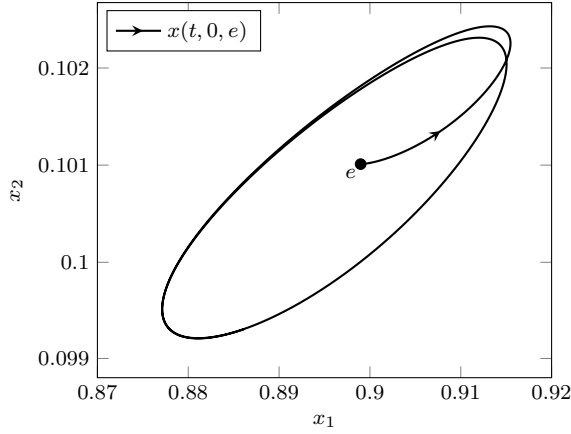
\begin{figure}[t]
{\footnotesize
  \begin{center}
\pgfplotsset{scaled y ticks=false}
\pgfplotsset{compat=1.12}
\begin{tikzpicture}
    \begin{axis}[width=3in,
   xmin=0.87, xmax=0.92, ymax=.1027,ymin=.0988,
    xlabel={$x_1$},
    ylabel={$x_2$},
    ytick={.0990,.100,...,.103},
scaled ticks = false, 
x tick label style={
        /pgf/number format/.cd,
        precision=2,
    },
y tick label style={
        /pgf/number format/.cd,
        precision=4,
    },
    legend style={at={(0.02,.98)},anchor=north west},
    ]
    \addplot[black, line width=.8pt, forget plot] file{Figures/ex1_traj.txt} [arrow inside={}{0.05}];
    \addplot[black, mark=*, forget plot] coordinates {(0.8989794856,	0.1010205144)};
    \node[label={[label distance=-6pt]south west:$e$}] at (0.8989794856,	0.1010205144) {};
\addlegendimage{arrow inside={}{0.6}, line width=.8pt};
\addlegendentry{$x(t,0,e)$};
\end{axis}
\end{tikzpicture}
\end{center}
}
\caption{ RFM in Example~\ref{exa:rfm}:
 Averaging the periodic excitation leads to an autonomous approximating system with a unique equilibrium~$e$.  For the periodic excitation the trajectory $x(t,0,e)$ converges to the unique periodic solution~$\gamma(t)$  of the~RFM,
 and Theorem~\ref{thm:1} provides a bound on the distance between~$x(t,0,e)$ and~$e$ for all~$t\geq 0$.}
\label{fig:rfm_cyc} 
\end{figure}
\begin{figure}[t]
{\footnotesize
  \begin{center}
\pgfplotsset{scaled y ticks=false}
\pgfplotsset{compat=1.12}
\begin{tikzpicture}
    \begin{axis}[width=3in, height=2in,
   xmin=0, xmax=6, ymax=.03,ymin=0,
    xlabel={$t$},
    legend style={at={(1.02,.98)},anchor=north west},
y tick label style={
        /pgf/number format/.cd,
        precision=4,
    },
    ]
    \addplot[black, line width=.8pt,] file{Figures/ex1_exact_diff.txt};
    \addplot[black, dashed, line width=.8pt,] file{Figures/ex1_bound.txt};
\end{axis}
\end{tikzpicture}
\end{center}
}
\caption{ The error~$|x(t,0,e)-e|_1$ (solid line) and the bound provided by~\eqref{eq:bdee} (dashed line) as a function of time for Example~\ref{exa:rfm}.} \label{fig:err_p_bnd}
\end{figure}
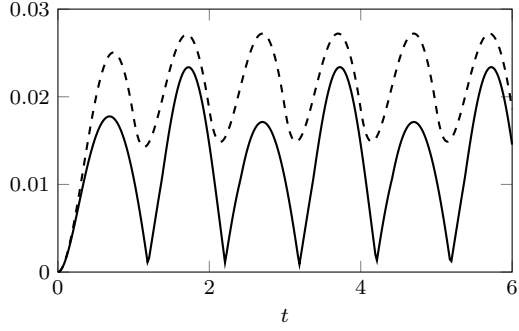

\begin{figure}[t]
{\footnotesize
  \begin{center}
\pgfplotsset{scaled y ticks=false}
\pgfplotsset{compat=1.12}
\begin{tikzpicture}
    \begin{axis}[width=3in, height=2in,
   xmin=0, xmax=2, ymax=.04,ymin=0,
    xlabel={$t$},
    xtick={0,1,2},
    xticklabels={$0$,$\frac{1}{2}T$, $T$},
    legend style={at={(1.02,.98)},anchor=north west},
y tick label style={
        /pgf/number format/.cd,
        precision=4,
    }
    ]
    \addplot[black, line width=.8pt] file{Figures/ex1_periodic_traj.txt};
    \addplot[black, dashed, line width=.8pt] file{Figures/ex1_periodic_bound.txt}; 
    \addplot[black, dotted, line width=.8pt] coordinates{(0,0.033673504811215) (2,0.033673504811215)};
\end{axis}
\end{tikzpicture}
\end{center}
}
\caption{
RFM in Example~\ref{exa:rfm}. The error~$|\gamma(t)-e|_1$ (solid line) and the bounds in 
Theorem~\ref{thm:averexci}:   the bound~\eqref{eq:30}  (dashed line) and the bound~\eqref{eq:symcb}
(dotted line). These bounds can be obtained analytically for this example.}
\label{fig:err_period}
\end{figure}
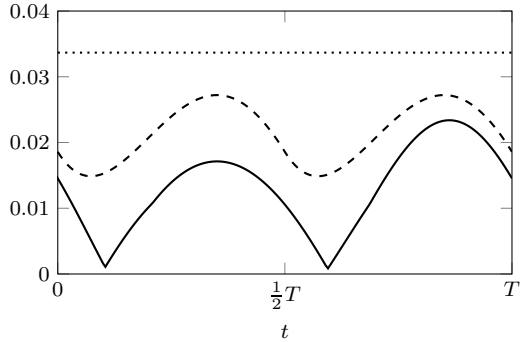

Thm.~\ref{thm:averexci} is based on 
  averaging   the  excitation over a period, thus obtaining a constant input. 
Such an approximation is not always suitable. 
For example, when~$u(t)=\sin(2\pi t/T)$ then~$\bar u:=\frac{1}{T}\int_0^T u(t) dt=0$
for all~$T$.
					This may obscure the effect of the \emph{frequency} of the excitation in the derived bounds.
	The approach in the next subsection 
	  tries to overcome this using a
	different approximating  system, namely,  an LTI system  that is
		excited by the original periodic
	input. 

	\subsection{An LTI approximation}
\begin{Theorem}\label{thm:linearized}
Consider the system
\be\label{eq:fdynu}
\dot x=F(x,u),
\ee
where~$u$ is  an~$m$-dimensional 
 periodic excitation  with period~$T >  0$. 
Suppose that the trajectories of~\eqref{eq:fdynu} evolve on a compact and convex state space~$\Omega\subset\R^n$. 
 Assume that for some
 vector norm~$|\cdot|:\R^n\to\R_+$ and  the  induced
 matrix measure~$\mu:\R^{n\times n }\to\R$,
 \begin{align*}
			\mu\left (	\frac{\partial F}{\partial x} (x,u(t)) \right  )\leq -\eta<0   
 \end{align*}
for all~$t\geq 0$, all~$x \in \Omega$. %
Let $\gamma(t)$ be the unique, attracting, $T$-periodic orbit of~\eqref{eq:fdynu} in~$\Omega$.

Suppose also that for the unforced dynamics,
i.e.~$\dot x=F(x,0)$, there exists a locally stable equilibrium
 point~$e \in \Omega$, and
without loss of generality, that~$e=0$. 
Let
  \begin{align*}
  A:=\frac{\partial F}{\partial x} (0,0),\qquad B:=\frac{\partial F}{\partial u}(0,0),
\end{align*}
and consider the~LTI  approximating system 
  \begin{align} \label{eq:ltigu}
				\dot y  =Ay+B u:=G(y,u ).
  \end{align}
  Pick~$x_0, y_0 \in \Omega$ and let~$\tau \geq 0$ be such that~$y(t)\in\Omega$ for all~$t\in[0,\tau]$ where $y(t)$ is the solution of \eqref{eq:ltigu} with $y(0)=y_0$. 
 Then  
 \begin{align*}
&|x(t)-y(t)|\leq e^{-\eta t}|x_0-y_0|+\int_{ 0}^t e^{-\eta (t-s)}  |  F(y(s),u(s))-G(y(s),u(s))   | \diff s  
 \end{align*}
for all~$t \in [0,\tau]$.  
Moreover, let $\kappa(t)$ be the unique $T$-periodic trajectory of \eqref{eq:ltigu} and assume that~$\kappa(t)\in\Omega$ for all~$t$. Then, for all $\tau\in[0,T]$, 
\begin{align}
\nonumber |\gamma(\tau)-\kappa(\tau)|&\leq \frac{e^{-\eta \tau}  }{1-e^{-\eta T}}\int_{0}^T e^{-\eta(T -s)} |F(\kappa(s),u(s))-G(\kappa(s),u(s))| \diff s \\
  \label{eq:26}                            &\quad +\int_{0}^\tau e^{-\eta(\tau-s)} |F(\kappa(s),u(s))-G(\kappa(s),u(s))| \diff s \\
  &\leq \frac{1}{\eta} \max_{t\in[0,T]}|F(\kappa(t),u(t))-G(\kappa(t),u(t))|   .\nonumber
\end{align}
\end{Theorem}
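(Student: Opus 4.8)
The plan is to recognize Theorem~\ref{thm:linearized} as a direct instance of the general machinery already assembled in Section~\ref{sec:bounds}, specialized to the approximating vector field $g(t,y):=G(y,u(t))=Ay+Bu(t)$. To set this up, I would first cast the forced nonlinear system \eqref{eq:fdynu} into the time-varying form used earlier by writing $f(t,x):=F(x,u(t))$, so that its Jacobian with respect to $x$ is $J(t,x)=\frac{\partial F}{\partial x}(x,u(t))$. Since $u$ is $T$-periodic, both $f(t,x)$ and $g(t,y)$ are $T$-periodic, and the standing contraction hypothesis $\mu\!\left(\frac{\partial F}{\partial x}(x,u(t))\right)\leq-\eta$ is exactly the condition $\mu(J(t,x))\leq-\eta$ required by \eqref{eq:conmm}. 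With this dictionary in place, the hypotheses of the earlier results are met on $\Omega$.

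For the first (trajectory-wise) bound, I would simply invoke \eqref{eq:err_bnd} with $t_0=0$, initial conditions $x_0$ and $y_0$, and with the quantity $f-g$ in that formula given by $F(y(s),u(s))-G(y(s),u(s))$ evaluated along the LTI trajectory $y(\cdot)$. Convexity of $\Omega$ together with the assumption $y(t)\in\Omega$ for $t\in[0,\tau]$ guarantees that the averaged Jacobian $M(t)$ appearing in the derivation of \eqref{eq:err_bnd} still satisfies $\mu(M(t))\leq-\eta$, so the bound holds verbatim on $[0,\tau]$.

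For the bound \eqref{eq:26} on the two periodic orbits, I would apply Theorem~\ref{thm:periodbound} directly, taking $\gamma$ to be the unique $T$-periodic orbit of the $f$-system (which exists by contraction) and $\kappa$ the prescribed $T$-periodic orbit of the LTI system, assumed to lie in $\Omega$. Specializing the function of \eqref{eq:21} to $c(\alpha)=\int_0^\alpha e^{-\eta(\alpha-s)}|F(\kappa(s),u(s))-G(\kappa(s),u(s))|\diff s$ and inserting it into \eqref{eq:16} yields the first two lines of \eqref{eq:26}, while the final, coarser line is precisely the conclusion of Corollary~\ref{cor:maincor} for this same $c$.

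The proof therefore carries no genuine analytical difficulty; the only points requiring care are bookkeeping ones. I must check that the time-varying reformulation $f(t,x)=F(x,u(t))$ is legitimate — that $F(x,u(t))$ and its $x$-Jacobian are continuous in $(t,x)$, which follows from continuity of $F$, $\frac{\partial F}{\partial x}$ and $u$ — and that the $T$-periodicity and contraction hypotheses of Theorem~\ref{thm:periodbound} are inherited through this reformulation. I would also state explicitly that the existence and uniqueness of the LTI periodic orbit $\kappa$ and its containment in $\Omega$ are taken as standing hypotheses rather than derived, so that no separate non-resonance or invariance argument for $\kappa$ is needed inside the proof itself.
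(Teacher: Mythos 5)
Your proposal is correct and matches the paper's own treatment: the paper states Theorem~\ref{thm:linearized} without a separate proof, precisely because (as in the proof of Theorem~\ref{thm:averexci}) it is the specialization of the ISS bound \eqref{eq:err_bnd}, Theorem~\ref{thm:periodbound}, and Corollary~\ref{cor:maincor} to $f(t,x)=F(x,u(t))$ and $g(t,y)=G(y,u(t))=Ay+Bu(t)$. Your bookkeeping remarks (continuity and periodicity of the reformulated vector fields, and treating $\kappa(t)\in\Omega$ as a hypothesis) are exactly the right points to check and are consistent with the paper.
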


We emphasize again that the advantage of the bounds  here
 is that the integrand depends on the difference between the vector fields~$F$ and $G$
evaluated along the solution~$\kappa$  of the~LTI system \eqref{eq:ltigu}. 
Note that our assumptions imply that~$A$ is Hurwitz and thus, for any initial condition, $y(t)$ converges to the periodic trajectory~$\kappa(t)$.
In some cases,  this solution  can be written explicitly,
 and the integral can be computed explicitly.  For example, if~$u(t)$ is a complex exponential,
 then~$\kappa(t)$ is also a complex exponential and can be 
easily computed using a Fourier transform. Then
 a bound on~$| F(\kappa (t),u(t))-G(\kappa (t),u(t))   |$,   $t\in[0,T]$, may be straightforward to establish.
This leads to the following corollary of Theorem~\ref{thm:linearized}. For the sake of simplicity, we state this for the case of a scalar control.
\begin{Corollary}\label{coro:ftrans}
Consider the system~\eqref{eq:fdynu} 
where~$u(t)=\sum_{i=1}^p a_i \cos(\omega_i t)$ with~$a_i \in \R$ and every~$\omega_i$ 
has the form~$\omega_i=2\pi k_i/T$, with~$k_i$ a non-negative integer.
 Suppose that the trajectories of~\eqref{eq:fdynu} evolve on a compact and convex state space~$\Omega\subset\R^n$. 
 Assume that for some
 vector norm~$|\cdot|:\R^n\to\R_+$ the  induced
 matrix measure~$\mu:\R^{n\times n }\to\R$
satisfies 
\begin{align*}
				\frac{\partial F}{\partial x} \left(x, \sum_{i=1}^p a_i\cos(\omega_i t)\right ) \leq -\eta<0  
\end{align*}
for all~$t\geq 0$ and all~$x \in \Omega$. 
Let $\gamma(t)$ be the unique, attracting, $T$-periodic orbit of~\eqref{eq:fdynu} in~$\Omega$.

Suppose also that   the unforced dynamics,
i.e.~$\dot x=F(x,0)$  admits  a locally stable equilibrium
 point~$e \in \Omega$, and
without loss of generality, that~$e=0$. 
Let
  \begin{align*}
  A:=\frac{\partial F}{\partial x} (0,0),\qquad B:=\frac{\partial F}{\partial u}(0,0),
\end{align*}
and consider the approximating system 
  \begin{align} \label{eq:11}
				\dot y  =Ay+B u:=G(y,u ).
  \end{align} 
	Let~$\hat g(s):=(sI-A)^{-1}b$ and 
let~$\kappa(t)$ be the unique $T$-periodic trajectory of \eqref{eq:11}, that is,
\begin{align*}
  \kappa_r(t)=\sum_{i=1}^p a_i |\hat g_r(j\omega_i)|   \cos(\omega_i t+\angle \hat g_r(j \omega_i)),\quad r=1,\dots, n,
\end{align*}
and assume that~$\kappa(t)\in\Omega$ for all~$t\in [0,T]$. Then for all $\tau\in[0,T]$, 
\begin{align}\label{eq:nbh}
  |\gamma(\tau)-\kappa(\tau)|&
  \leq \frac{1}{\eta} \max_{t\in[0,T]}\left |H \left (\kappa(t),\sum_{i=1}^p a_i\cos(\omega_i t) \right )\right | 
\end{align}
where~$H(z,v):=F(z,v)-G(z,v)$.
\end{Corollary}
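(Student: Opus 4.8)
The plan is to recognize this corollary as the specialization of Theorem~\ref{thm:linearized} to the particular periodic excitation $u(t)=\sum_{i=1}^p a_i\cos(\omega_i t)$, so that almost all of the work reduces to two routine steps: checking that the hypotheses of Theorem~\ref{thm:linearized} hold for this $u$, and writing down the $T$-periodic trajectory $\kappa$ of the LTI approximating system~\eqref{eq:11} explicitly via its frequency response. Once $\kappa$ is known, the bound~\eqref{eq:nbh} is just the coarser bound already furnished by Theorem~\ref{thm:linearized}, rewritten in the $H$ notation.

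First I would verify that $u$ is an admissible excitation. Since every $\omega_i=2\pi k_i/T$ with $k_i$ a non-negative integer, each summand $a_i\cos(\omega_i t)$ has period $T/k_i$ dividing $T$, so $u$ is $T$-periodic. The remaining hypotheses (compact convex $\Omega$, the matrix-measure bound, the unique attracting $T$-periodic orbit $\gamma$, and the locally stable equilibrium $e=0$ of the unforced dynamics $\dot x=F(x,0)$) are inherited verbatim from Theorem~\ref{thm:linearized}. As noted there, these assumptions force $A=\frac{\partial F}{\partial x}(0,0)$ to be Hurwitz.

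Next I would compute $\kappa$ explicitly. Because $A$ is Hurwitz, the forced LTI system~\eqref{eq:11} possesses a unique bounded trajectory, and since it is driven by the $T$-periodic input $u$, this trajectory is itself $T$-periodic and equals the claimed $\kappa$. Exploiting linearity of~\eqref{eq:11}, I would invoke \emph{superposition} and handle each frequency $\omega_i$ separately: feeding $a_i\cos(\omega_i t)$ through the vector transfer function $\hat g(s)=(sI-A)^{-1}b$ produces, in the $r$-th coordinate and in steady state, the sinusoid $a_i|\hat g_r(j\omega_i)|\cos(\omega_i t+\angle\hat g_r(j\omega_i))$. This is the classical LTI frequency-response formula and is where the bulk of the (routine) computation sits. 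Summing over $i$ yields exactly the stated expression for $\kappa_r(t)$; since each term has frequency an integer multiple of $2\pi/T$, the sum is $T$-periodic, consistent with uniqueness.

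Finally, with $\kappa$ in hand and $\kappa(t)\in\Omega$ assumed throughout $[0,T]$, I would apply the second bound of Theorem~\ref{thm:linearized}, namely $|\gamma(\tau)-\kappa(\tau)|\le\frac{1}{\eta}\max_{t\in[0,T]}|F(\kappa(t),u(t))-G(\kappa(t),u(t))|$, and substitute $H(z,v):=F(z,v)-G(z,v)$ together with $u(t)=\sum_{i=1}^p a_i\cos(\omega_i t)$ to obtain~\eqref{eq:nbh}. The only step requiring genuine care is the frequency-response computation, in particular the magnitude/phase bookkeeping and the fact that the transient decays (so that the periodic steady state is \emph{exactly} $\kappa$ rather than merely an asymptotic attractor); but this is standard LTI theory and poses no real obstacle, everything else being direct bookkeeping on top of Theorem~\ref{thm:linearized}.
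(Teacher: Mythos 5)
Your proposal is correct and follows exactly the route the paper intends: the paper states this corollary without a separate proof precisely because it is the specialization of Theorem~\ref{thm:linearized} to the multi-sinusoid input, with $\kappa$ written out via the standard LTI frequency-response/superposition formula and the bound~\eqref{eq:nbh} being the final (coarser) inequality of that theorem expressed in the $H=F-G$ notation. Your verification that each $\omega_i=2\pi k_i/T$ makes $u$ a $T$-periodic input, and that the Hurwitz property of $A$ guarantees the periodic steady state is exactly $\kappa$, is the same bookkeeping the paper leaves implicit.
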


  \begin{Remark}\label{rem:nlp}
	Our focus here is on cases where~$\kappa(t)$ is explicitly known. However, the derived
	bounds are useful even when this is not the case. For example, suppose that~$\kappa(t)$ is not known, yet  
	the bounds 
	\begin{align*}
	 C_r&:=\max_{t\in [0,T]}|\kappa_r(t)|, \quad r=1,\ldots,n, \\
	C_v&:=\max_{t\in [0,T]}|u(t)|,
	\end{align*}
	are known. Then~\eqref{eq:nbh} implies that
	\begin{alignat*}{2}
	 |\gamma(\tau)-\kappa(\tau)|
  &&\leq \hspace{.4in}\frac{1}{\eta}\max_{z,v} \quad &|H(z,v)|\\
 &&\quad \textup{ subject to } |z_r|&\leq C_r,\quad r=1,\ldots,n\\
 &&|v|&\leq C_v.
\end{alignat*}
 This nonlinear optimization
program  is useful because   the feasible set is a box constraint and thus is a convex set.
	\end{Remark}

\begin{Example}  \label{exa:rfm2bnd}
We  again consider the RFM with $n=2$ and the periodic initiation rate~$u_0(t):=\lambda_0+u(t)$, with~$\lambda_0>1$ and~$u(t)=\sin(2\pi t/T)$. 
Again, let~$e$ be the unique equilibrium of the system 
when the initiation rate is~$\lambda_0 $ (see~\eqref{eq:e2exp}).  
Let~$\delta x:=x-e$. Then the linearized system is~$\dot{\delta x}=A \delta x+b u$, where
 \begin{align*}
  A=
  \begin{bmatrix}
    -\lambda_0-\lambda_1 (1-e_2)&\lambda_1 e_1\\
    \lambda_1(1-e_2)& -\lambda_1 e_1 - \lambda_2
  \end{bmatrix},\quad b=
                     \begin{bmatrix}
                       1-e_1\\0
                     \end{bmatrix}.
\end{align*}
Note that~$\mu_1(A)=\max\{-\lambda_0,-\lambda_2\}<0$, so, in particular,~$A$ is Hurwitz. 
  Thus, the approximating system is
\begin{align}
  \label{eq:3}
  \dot{y}=A(y-e)+ bu=:G(y,u),\quad u(t)=\sin(2\pi t/T).
\end{align}
The difference between the vector fields  evaluated 
 along a solution of the~$y$ system is
\begin{align*}
\nonumber &F(y, \sin(2\pi t/T))-  G(y, \sin(2\pi t/T))= \begin{bmatrix}
    \lambda_1(y_1-e_1)(y_2-e_2)-(y_1-e_1)\sin(2\pi t/T)\\-\lambda_1(y_1-e_1)(y_2-e_2)
  \end{bmatrix}.
\end{align*}
Let
$
  \hat{g}(s):=\begin{bmatrix}\hat{g}_1(s)\\\hat{g}_2(s)\end{bmatrix}=(sI-A)^{-1}b ,
$
and let  $\kappa(t):\R \to \R^2$ be the unique periodic trajectory of~\eqref{eq:3} defined for all $-\infty<t<\infty$.
Then
\begin{align*}
								\kappa(t) -e= \begin{bmatrix}
								|\hat g_1(j\omega)|\sin(  \omega t +\angle \hat g_1(j\omega) )\\
								|\hat g_2 (j\omega) |\sin( \omega t +\angle \hat g_2(j\omega)  )  
										\end{bmatrix},
\end{align*}
with~$\omega:=2\pi/T$.  By Remark~\ref{rem:nlp},  
\begin{alignat}{3}
\nonumber  |\gamma(t)-\kappa(t)|_1&\leq& \max_{z_1,z_2,v}\ &\frac{1}{\eta}     (|\lambda_1z_1z_2-z_1v|+|\lambda_1z_1z_2|)\\
\nonumber&&\text{subject to }|z_1|&\leq |\hat{g}_1(j\omega)|\\
\nonumber&&|z_2|&\leq |\hat{g}_2(j\omega)|\\
&&|v|&\leq 1 \nonumber \\
\label{eq:bnmfp}&=\frac{1}{\eta} \left( 
	2\lambda_1 |\hat{g}_1(j\omega )||\hat{g}_2(j\omega)|+|\hat{g}_1(j\omega)|\right)\hspace{-3in}
\end{alignat}
where~$\eta:=\min\{\lambda_0-1,\lambda_2\}$ as before. 
Note that the bound here depends on the frequency of the
 periodic excitation.
The more exact bound in~\eqref{eq:26} 
  can be computed numerically. 

For the parameters~$\lambda_1=1/2$, $\lambda_2=4$, and~$T=2$,
Figure~\ref{fig:err_period_ex2_phase} shows the equilibrium point when $\lambda_0=4$,  
the periodic trajectory for the case when the initiation rate is~$u_0(t)=4+\sin(2\pi t/T)$, 
and the  periodic trajectory  of the linearized system. 
Figure~\ref{fig:err_period_ex2} illustrates the bounds from Theorem~\ref{thm:linearized}.
It may be observed that these bounds provide a reasonable estimate of the error. 
\qedwhite
\end{Example}

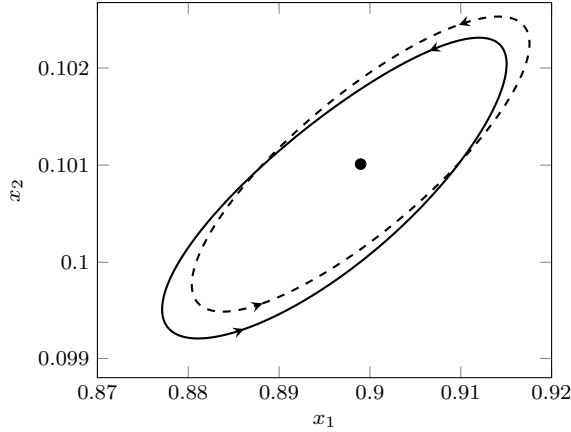
\begin{figure}[t]
{\footnotesize
  \begin{center}
\pgfplotsset{scaled y ticks=false}
\pgfplotsset{compat=1.12}
\begin{tikzpicture}
    \begin{axis}[width=3in,
   xmin=0.87, xmax=0.92, ymax=.1027,ymin=.0988,
    xlabel={$x_1$},
    ylabel={$x_2$},
    ytick={.0990,.100,...,.103},
scaled ticks = false, 
x tick label style={
        /pgf/number format/.cd,
        precision=2,
    },
y tick label style={
        /pgf/number format/.cd,
        precision=4,
    },
    legend style={at={(1.02,.98)},anchor=north west},
    ]
    \addplot[black, mark=*, only marks] coordinates {(0.8989794856,	0.1010205144)};
    \addplot[black, line width=.8pt] file{Figures/ex2_periodic_traj.txt} [arrow inside={}{0,.5}]; 
    \addplot[black, dashed, line width=.8pt] file{Figures/ex2_periodic_traj_approx.txt} [arrow inside={}{0,.5}];
\end{axis}
\end{tikzpicture}
\end{center}
}
  \caption{   RFM in Example~\ref{exa:rfm2bnd}.  The equilibrium~$e$
	for $\lambda_0= 4$  is marked by a dot. 
	The periodic trajectory~$\gamma(t)$  of the RFM  (solid line)
	and the   periodic trajectory~$\kappa(t)$  of the linearized system (dashed line)
	when~$u_0(t)=4+\sin(2\pi t/T)$.} 
\label{fig:err_period_ex2_phase}
\end{figure}

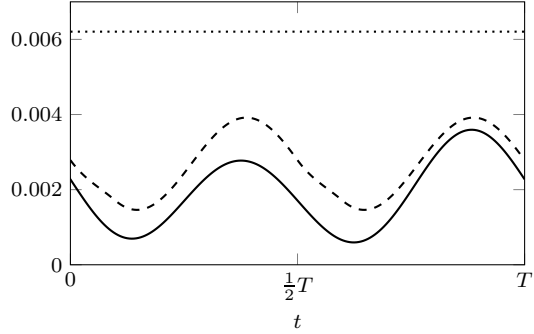
\begin{figure}[t]
{\footnotesize
  \begin{center}
\pgfplotsset{scaled y ticks=false}
\pgfplotsset{compat=newest}
\begin{tikzpicture}
    \begin{axis}[width=3in, height=2in,
   xmin=0, xmax=2, ymax=.007,ymin=0,
    xlabel={$t$},
    xtick={0,1,2},
    xticklabels={$0$,$\frac{1}{2}T$, $T$},
scaled ticks = false, 
    legend style={at={(1.02,.98)},anchor=north west},
y tick label style={
        /pgf/number format/.cd,
        precision=4,
        /pgf/number format/fixed,
    },
    ]
    \addplot[black, line width=.8pt] file{Figures/ex2_periodic_diff.txt};
    \addplot[black, dashed, line width=.8pt] file{Figures/ex2_periodic_bound.txt}; 
    \addplot[black, dotted, line width=.8pt] coordinates{(0,0.0062057604615785569) (2,0.0062057604615785569)};
\end{axis}
\end{tikzpicture}
\end{center}
}
\caption{
RFM in  Example~\ref{exa:rfm2bnd}. The error~$|\gamma(t)-\kappa(t)|_1$ (solid line) 
and the bounds~\eqref{eq:26} (dashed lines) and \eqref{eq:bnmfp} (dotted line). }
\label{fig:err_period_ex2}
\end{figure}

The bound~\eqref{eq:bnmfp} has some interesting implications. For example, if~$\hat g_1(j\omega)=0$ for some~$\omega$ then~\eqref{eq:bnmfp}  implies that~$\gamma(t)\equiv \kappa(t)$
 for a sinusoidal excitation with frequency~$\omega$.
 Similarly, if~$\lim_{\omega\to\infty}\hat g_1(j \omega) =0$ then~\eqref{eq:bnmfp} implies that
  for a high frequency sinusoidal forcing term,~$\gamma$ will approach~$\kappa$. Note that 
	the conclusions on~$\gamma$ here are based on \emph{properties of the~LTI system}.
	In the next section, we use
this idea to derive a theoretical result on the response of contractive systems to a sinusoidal input.  
\section{Contractive Systems as Low-Pass Filters}\label{sect:theory}

We consider a  contractive systems with an additive input and show that for  
  a high-frequency sinusoidal input, the periodic trajectory of the contractive system
	is very similar to that of a suitable~LTI system.
		For the sake of simplicity, we state this for the case of a scalar control.
\begin{Theorem}\label{thm:lpf}
Consider the system
\be\label{eq:FF}
\dot x=f(x)+bu
\ee
where
\[
u(t)=a \cos(\omega  t+\phi) .
\]
 Suppose that the trajectories of~\eqref{eq:FF} evolve on a compact 
and convex state space~$\Omega\subset\R^n$.
 Assume that for some
 vector norm~$|\cdot|:\R^n\to\R_+$, and     induced
 matrix measure~$\mu:\R^{n\times n }\to\R$,
 \begin{align*}
			\mu\left(	\frac{\partial f}{\partial x} \left(x \right ) \right)\leq -\eta<0   
 \end{align*}
for  all~$x \in \Omega$. Denote~$T:=2\pi/ \omega$, and 
let $\gamma(t)$ be the unique, attracting, $T$-periodic orbit of~\eqref{eq:FF} in~$\Omega$.

Suppose also that for the unforced dynamics,
i.e.~$\dot x=f(x )$ there exists a locally stable equilibrium
 point~$e \in \Omega$, and
without loss of generality, that~$e=0$. 
Let~$A:=\frac{\partial f}{\partial x} (0)$, and consider the approximating system 
  \begin{align} \label{eq:ling}
				\dot y  =Ay+ b  u:=G(y,u ).
  \end{align} 
	Let~$\hat g(s):=(sI-A)^{-1}b$ and 
let~$\kappa(t)$ be the unique $T$-periodic trajectory of~\eqref{eq:ling}, that is,
\be\label{eq:kpom}
\kappa_r(t)= a |\hat g_r(j\omega)|   \cos(\omega t+\phi+\angle \hat g_r(j \omega)),\quad r=1,\dots,n .
\ee
 Then   
\be\label{eq:boute}
\max_{t\in[0,T]} |  \gamma (t)-\kappa(t) |=  o(1/\omega).
\ee
\end{Theorem}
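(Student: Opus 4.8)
The plan is to specialize the already-established linearized bound and then extract the decay rate from the smoothness of $f$ together with the high-frequency roll-off of the LTI frequency response. Since $F(x,u)=f(x)+bu$ and $G(y,u)=Ay+bu$, the difference of the two vector fields is
\[
F(z,v)-G(z,v)=f(z)-Az,
\]
which is independent of $v$. Hence the last inequality in \eqref{eq:26} of Theorem~\ref{thm:linearized} (applicable once $\kappa(t)\in\Omega$, which I address below) reduces to
\[
\max_{t\in[0,T]}|\gamma(t)-\kappa(t)|\leq \frac{1}{\eta}\max_{t\in[0,T]}|f(\kappa(t))-A\kappa(t)|.
\]
So everything comes down to bounding $|f(z)-Az|$ along the curve $\kappa$ and controlling how small $\kappa$ is as $\omega$ grows.

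For the first factor, recall that $e=0$ gives $f(0)=0$ and $A=\frac{\partial f}{\partial x}(0)$. Writing $H(z):=f(z)-Az$ and $f(z)=\int_0^1\frac{\partial f}{\partial x}(sz)\,z\,\diff s$, I would estimate
\[
|H(z)|=\left|\int_0^1\Big(\tfrac{\partial f}{\partial x}(sz)-\tfrac{\partial f}{\partial x}(0)\Big)z\,\diff s\right|\leq \rho(|z|)\,|z|,
\]
where $\rho(r):=\sup_{|w|\le r}\|\tfrac{\partial f}{\partial x}(w)-A\|$ satisfies $\rho(r)\to 0$ as $r\to 0$ by continuity of the Jacobian. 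Thus $H(z)=o(|z|)$; this superlinear estimate is exactly what the $C^1$ hypothesis buys, and it is what will produce the $o(1/\omega)$ (rather than a stronger $O(1/\omega^2)$) in the conclusion.

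For the second factor, I would read off from the explicit periodic solution \eqref{eq:kpom} that $|\kappa_r(t)|\le a|\hat g_r(j\omega)|$, so $\max_{t}|\kappa(t)|\le M(\omega)$ with $M(\omega)$ proportional to $\max_r|\hat g_r(j\omega)|$. Since $\hat g(j\omega)=(j\omega I-A)^{-1}b=\tfrac{1}{j\omega}(I-A/(j\omega))^{-1}b$, and the Neumann series gives $\|(I-A/(j\omega))^{-1}\|\le(1-\|A\|/\omega)^{-1}$ for $\omega>\|A\|$, we obtain $M(\omega)=O(1/\omega)$; in particular $M(\omega)\to 0$ while $\omega M(\omega)$ stays bounded. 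Combining the two factors,
\[
\omega\max_{t\in[0,T]}|\gamma(t)-\kappa(t)|\leq \frac{1}{\eta}\,\rho\big(M(\omega)\big)\,\big(\omega M(\omega)\big),
\]
and since $\rho(M(\omega))\to 0$ as $\omega\to\infty$ while $\omega M(\omega)=O(1)$, the right-hand side tends to $0$, which is precisely \eqref{eq:boute}. The one point requiring care — and the main obstacle — is the hypothesis $\kappa(t)\in\Omega$ needed to invoke Theorem~\ref{thm:linearized}: since \eqref{eq:boute} is purely asymptotic, it suffices that $\kappa$ lies in $\Omega$ for all sufficiently large $\omega$, which holds because $\max_t|\kappa(t)|=O(1/\omega)\to 0$ forces $\kappa$ into a shrinking neighborhood of $e=0\in\Omega$ (here one uses that $e$ lies in the interior of $\Omega$, or restricts to the large-$\omega$ regime relevant to the claim). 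Apart from verifying this, the argument is routine.
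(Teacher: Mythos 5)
Your proof is correct and follows essentially the same route as the paper's: both invoke the linearized-approximation bound of Theorem~\ref{thm:linearized}/Corollary~\ref{coro:ftrans}, bound the nonlinear remainder $f(\kappa(t))-A\kappa(t)=o(|\kappa(t)|)$ using the $C^1$ hypothesis and $f(0)=0$, and combine this with $|\kappa(t)|=O(1/\omega)$ (which the paper gets from the adjugate formula $\hat g(s)=\Adj(sI-A)b/\det(sI-A)$ and you get from a Neumann series---an immaterial difference). If anything, your write-up is more careful than the paper's on two points it glosses over: the explicit modulus $\rho(\cdot)$ justifying the $o(|z|)$ estimate, and the requirement $\kappa(t)\in\Omega$ for large $\omega$.
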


\begin{proof}
It follows from Corollary~\ref{coro:ftrans}
that for any~$\tau$, 
\begin{align} \label{eq:lkip}
  |\gamma(\tau)-\kappa(\tau)|&
  \leq \frac{1}{\eta} \max_{t\in[0,T]}\left |
	f   (\kappa(t))   - A \kappa(t)   \right | \nonumber  \\
	&\leq \frac{1}{\eta} \max_{t\in[0,T]} 
	o (|\kappa(t)|)        .
\end{align}
 Since~$\hat g(s)=\frac{\Adj(sI-A)  }{\det(sI-A)}  b $,  where~$\Adj$ denotes the adjugate,~\eqref{eq:kpom}
 implies that~$ | \kappa_r (t)|=O(1/ \omega)$ for all~$r$ and all~$t \in[0,T]$.
Combining this with~\eqref{eq:lkip}
completes the proof.
\end{proof}

The next two examples demonstrate
  Theorem~\ref{thm:lpf}.
\begin{Example}\label{exa:new_from_russo}
We consider a
 basic model for an externally driven  
 transcriptional module that is  ubiquitous in both
biology and synthetic biology (see, e.g.,~\cite{retro_2008,entrain2011}):
\begin{align} \label{eq:trans_module}
\dot x_1=& u-\delta x_1+k_1 x_2-k_2(e_T-x_2)x_1, \nonumber \\
\dot x_2=& - k_1 x_2 +k_2(e_T-x_2)x_1,
\end{align}
where~$\delta,k_1,k_2,e_T$ are strictly positive parameters.
Here~$x_1(t)$ is the concentration at time~$t$ of a transcriptional factor~$X$
that regulates  a downstream
transcriptional module by binding to a promoter
 with concentration~$e(t)$ yielding a protein-promoter complex~$Y$ 
with concentration~$x_2(t)$.
 The binding reaction   is reversible with  binding and
dissociation rates~$ k_2$ and $k_1$, respectively. The linear degradation rate of~$X$ is~$\delta$,
and as the promoter is not
subject to decay, its total concentration, $e_T$, is conserved,
so~$e(t)=e_T-x_2(t)$ for all~$t\geq 0$. The input~$u(t) $ might 
 represent for example the concentration of an enzyme or of a second
messenger that activates~$X$, so we assume that~$u(t)\geq 0$ for all~$t\geq 0$. 

Trajectories of~\eqref{eq:trans_module} evolve on~$[0,\infty)\times[0,e_T]$. 
For an input satisfying~$0 \leq u(t)\leq c$ for all~$t\geq 0$, the set
$\Omega  := [0, (c+k_1 e_T)/\delta] \times[0,e_T]$ is 
 a convex and compact   invariant  set.  

 Ref.~\cite{entrain2011}  has shown  that~\eqref{eq:trans_module} is
contractive with respect to  a certain  weighted   $L_1$ norm.
Indeed, the  Jacobian of~\eqref{eq:trans_module}
is
\begin{align*}
  J(x)=\begin{bmatrix}    -\delta-k_2(e_T-x_2)& k_1+k_2 x_1 \\
k_2(e_T-x_2) &-k_1-k_2 x_1\end{bmatrix},
\end{align*}
so for~$D:=\diag(d,1)$, with~$d>0$, 
\be\label{eq:djaco}
DJ(x)D^{-1}=\begin{bmatrix}    -\delta-k_2(e_T-x_2)& ( k_1+k_2 x_1) d \\
k_2(e_T-x_2)/d &-k_1-k_2 x_1\end{bmatrix}.
\ee
The off-diagonal terms here are non-negative, 
and this means that
		  for any~$d\in( \frac{k_2e_T}{k_2e_T+\delta}   ,1)$,
\begin{align*}
		\mu_{1,D}(J(x) )&\leq -\eta,\text{ for all } \begin{bmatrix} x_1&x_2 \end{bmatrix}'\in\Omega,
\end{align*}
where~$\eta:= \min\{   k_1(1-d)        , \delta+k_2 e_T  (1-d^{-1}) \}  >0  $.
Thus,~\eqref{eq:trans_module} is contractive with respect to the scaled norm~$|\cdot|_{1,D}$
with contraction rate~$\eta$.

Linearizing~\eqref{eq:trans_module} yields~$\dot y=G(y,u)=Ay+b u$,  with
\begin{align*} %
A:=\begin{bmatrix}
-\delta -k_2 e_T  &  k_1     \\
k_2 e_T & - k_1
\end{bmatrix},\quad b:=\begin{bmatrix}
1      \\
 0
\end{bmatrix}, 
\end{align*}
and
\begin{align*}
			\hat g(s)&=(sI-A)^{-1}b=\frac{1}{s^2+(  \delta+k_1+k_2 e_T)s+\delta k_1}\begin{bmatrix}
							s+k_1\\k_1
			\end{bmatrix}.  
\end{align*}
Since~$f(y)-Ay=    k_2 y_1 y_2\begin{bmatrix}1&-1\end{bmatrix} ' $, the bound~\eqref{eq:lkip}
yields
\begin{align}\label{eq:dbound}
  |D(\gamma(\tau)-\kappa(\tau))|_{1}&
  \leq \frac{k_2}{\eta} \max_{t\in[0,T]}|  \kappa_1(t)\kappa_2(t) \begin{bmatrix}1&-1\end{bmatrix} '   |_{1,D} \nonumber\\
	&\leq \frac{k_2(d+1)}{\eta} \max_{t\in[0,T]}|  \kappa_1(t)\kappa_2(t) |  . 
\end{align}
Note that for any   input in the form
\[
u(t)=\sum_{i=1}^p a_i\cos(\omega_i t),
\]
 the periodic trajectory~$\kappa(t)$
 is explicitly known 
and thus the bound~\eqref{eq:dbound} is explicit. 
Figure~\ref{fig:trans10} 
depicts  the trajectories of both
the contractive system~\eqref{eq:trans_module} and of the LTI system 
for the parameters~$k_1=1$, $k_2=5$, $\delta=1$, $e_T=2$, 
and the excitation~$u(t)=\cos(\omega t)$ 
for two different values of~$\omega$.\footnote{This control is not positive for all times, yet 
  for the initial conditions in the simulations the trajectory remains in a convex and compact region
in which  the off-diagonal terms in~\eqref{eq:djaco} are non-negative and contraction holds.}
It may be seen that for a larger value of~$\omega$ the difference between~$\gamma$ and~$\kappa$ decreases,
as anticipated by~\eqref{eq:boute}. 
\qedwhite
\end{Example}

\begin{figure}[t]
  \begin{center}
\pgfplotsset{compat=newest}
\begin{tikzpicture}
    \begin{axis}[width=.4\linewidth,
   xmin=-.15, xmax=.15, ymax=1.25,ymin=-1.25,
    xlabel={$x_1$},
    ylabel={$x_2$},
scaled ticks = false, 
    legend style={at={(1.02,.98)},anchor=north west},
    ]
    \addplot[black, dashed, line width=.8pt] file{Figures/ex4_lin_1.txt}  [arrow inside={}{0,.5}]; 
    \addplot[black, line width=.8pt] file{Figures/ex4_exact_1.txt}  [arrow inside={}{0,.5}]; 
\end{axis}
\end{tikzpicture}
\begin{tikzpicture}
    \begin{axis}[width=.4\linewidth,
   xmin=-.11, xmax=.11, ymax=.2,ymin=-.2,
    xlabel={$x_1$},
    ylabel={$x_2$},
    xtick={-.1,0,.1},
    legend style={at={(1.02,.98)},anchor=north west},
    ]
    \addplot[black, dashed, line width=.8pt] file{Figures/ex4_lin_5.txt}  [arrow inside={}{0,.5}]; 
    \addplot[black, line width=.8pt] file{Figures/ex4_exact_5.txt}  [arrow inside={}{.05,.55}]; 
\end{axis}
\end{tikzpicture}
  \caption{Trajectories~$\gamma$ (solid line)  and $\kappa$ (dashed line) 
	for the system in Example~\ref{exa:new_from_russo} for 
	$\omega=1$ (top) and~$\omega=5$ (bottom). Note the different scales in the figures.}  \label{fig:trans10}
  \end{center}
\end{figure}
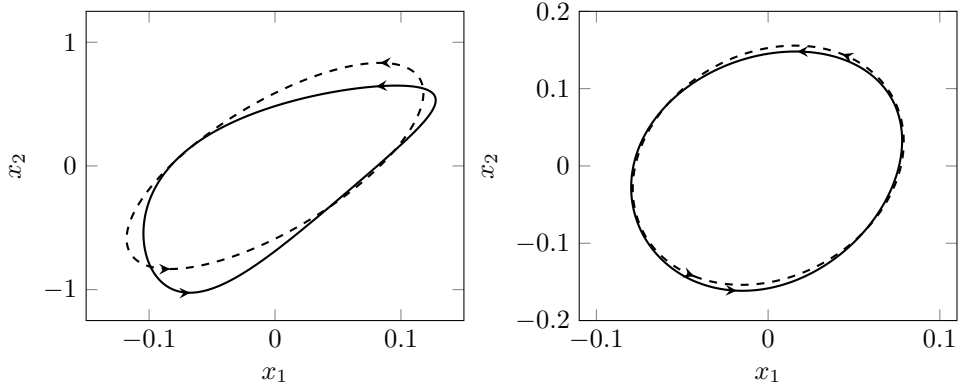

The next example demonstrates the result in Theorem~\ref{thm:lpf} using a nonlinear system for which the
frequency response has been computed explicitly in~\cite{pavlov2007}.
\begin{Example}\label{exa:frew}
Consider the system:
\begin{align}\label{eq:exnpmo}
							\dot x_1&=-x_1+x_2^2,\nonumber \\
							\dot x_2&=-x_2+u,
\end{align}
where the excitation is~$u(t)=a \sin(\omega t)$, with~$a,\omega >0$. It is clear that~$\Omega_2:=[-a,a]$ is an invariant set of~$x_2$. The Jacobian  of~\eqref{eq:exnpmo} is 
$J(x)=\begin{bmatrix}   -1&2x_2\\0&-1   \end{bmatrix}$. For any $c>0$ and $D:=\diag(1,c )$,  we have
$DJ(x)D^{-1}=\begin{bmatrix}   -1&2x_2/c\\0&-1   \end{bmatrix}$,
so~$\mu_1(DJ(x)D^{-1})\leq -1+\frac{2a}{c}$ for all~$x_2\in\Omega_2$. If $c>2a$, then this systems is contractive with respect
to the scaled norm~$|z|_{1,D}:=|Dz|_1$ with contraction rate
\be\label{eq:etfde}
\eta=1-\frac{2a}{c}.
\ee
 Note that, by taking~$c$ arbitrarily large, we may obtain a contraction rate arbitrarily close to~$ 1$. The periodic trajectory~$\gamma(t)$ can be computed explicitly as follows. First, it is clear 
that
\begin{align*}
\gamma_2(t)=\frac{a}{\sqrt{1+\omega^2} } \sin(\omega t-\tan^{-1}(\omega) ),  
\end{align*}
and substituting this in the first equation of~\eqref{eq:exnpmo} yields
\begin{align}
  \label{eq:29} \gamma_1(t)&=M\left[1+5\omega^2+4\omega^4+(5\omega^2-1)\cos(2\omega t) +2 \omega(\omega^2-2)\sin(2\omega t) \right ],
\end{align}
where~$M:=\frac{a^2}{2(1+\omega^2)^2(1+4\omega^2)}$.

Note that the  unforced dynamics admits an equilibrium~$e=0$. The approximating system is~$\dot y=G(y,u)=-y+b u$, with~$b:=\begin{bmatrix}0&1\end{bmatrix}'$ and~$\hat g(s)=(s+1)^{-1} b$. Thus, $\kappa(t)=\begin{bmatrix} 0 & \gamma_2(t) \end{bmatrix}'$,
so~$\gamma(t)-\kappa(t)  =    \begin{bmatrix} \gamma_1(t) & 0 \end{bmatrix}' $,
and~$|\gamma(t)-\kappa(t) |_{1,D}=|\gamma_1(t)|$.
To apply Corollary~\ref{coro:ftrans} note that~$H(z,v) :=F(z,v)-G(z,v)=\begin{bmatrix}z_2^2&0\end{bmatrix}'$, 
so applying the bound~\eqref{eq:nbh}   gives
\begin{align}
\max_{t\in [0,T]}|\gamma(t)-\kappa(t) |_{1,D}&=\max_{t\in[0,T]}|\gamma_1(t)| \nonumber \\
&\leq \eta^{-1}\max_{t\in[0,T]}|\kappa^2_2(t)|  \nonumber \\
  \label{eq:42-3}&=\eta^{-1} \frac{a^2}{   1+\omega^2 }
\end{align}
where~$\eta$ is given in~\eqref{eq:etfde} with~$c>2a$,
and~$T=2\pi/\omega$. Taking $c\to\infty$ gives the explicit bound~$ \max_{t\in [0,T]}|\gamma(t)-\kappa(t) |_{1,D} \leq a^2/(1+\omega^2)$.
In fact, it follows from \eqref{eq:29}, after some calculation, that 
\begin{align}
  \label{eq:23}\max_{t\in[0,T]}|\gamma_1(t)|&=\frac{a^2(1+\sqrt{4\omega^2+1})}{2(1+\omega^2) \sqrt{4\omega^2+1}}.
\end{align}

Fig.~\ref{fig:gakap} depicts the exact difference~$\max_{t\in [0,T]}|\gamma(t)-\kappa(t) |_{1,D}$ given in \eqref{eq:23} and the bound $a^2/(1+\omega^2)$ implied by \eqref{eq:42-3}, as a function of~$\omega$
for~$a=1$. Theorem \ref{thm:lpf} guarantees $\max_{t\in[0,T]} |  \gamma (t)-\kappa(t) |=  o(1/\omega)$, as seen in the figure.
\qedwhite
\end{Example}
\begin{figure}
  \centering
  \begin{tikzpicture}
\pgfplotsset{compat=newest}
    \begin{axis}[width=.45\linewidth,
    xmode=log,
    ymode=log,
   xmin=.01, xmax=100, ymax=2,ymin=.0005,
    xlabel={$\omega$},
    ylabel={\footnotesize $ \max_{t}|\gamma(t)-\kappa(t) |$},
x label style={at={(axis description cs:.5,-.35)},anchor=north},
    grid=major,
legend cell align=left,
    yscale=.8,
    name=mag,
    legend style={at={(.99,1)},anchor=north east},
    ]
    \addplot[dashed, line width=1pt] file{Figures/ex6bnd.txt};
    \addplot[black, line width=1pt] file{Figures/ex6exact.txt};
    \addlegendentry{\footnotesize Bound};
    \addlegendentry{\footnotesize Exact};
\end{axis}
\end{tikzpicture}
  \caption{Maximum distance between the exact periodic trajectory~$\gamma(t)$ and the approximate periodic trajectory~$\kappa(t)$ as a function of the excitation frequency~$\omega$ for Example~\ref{exa:frew} with~$a=1$. The solid line is the exact difference, and the dashed line is the bound on the difference determined by Corollary~\ref{coro:ftrans}.}
  \label{fig:gakap}
\end{figure}
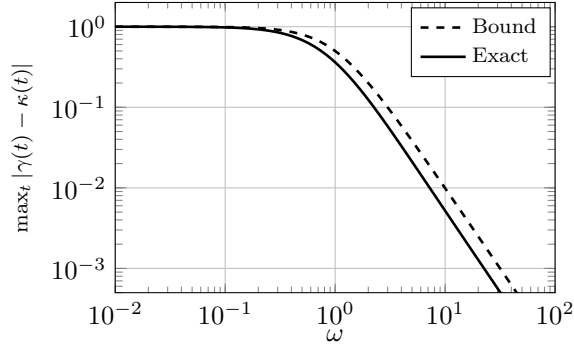

\section{Discussion}

Contractive systems entrain to   periodic excitations. 
 Analyzing the   corresponding     periodic solution of the contractive system and its dependence on
  various parameters is an important   theoretical question with many potential applications.
We developed approximation schemes for this periodic solution using LTI systems 
 and, using the~ISS property
of contractive systems, provided bounds on the approximation error. An important advantage of 
these bounds is that in some cases they can be computed explicitly.  This also led  
 to a  new theoretical result  on the behavior of contractive systems for
a high frequency   excitation.

More generally, it is well-known that contractive systems whose solutions
evolve on a compact state-space have a well-defined frequency response~\cite{pavlov2007,Ruffer2013277}.
For the contractive system $\dot{x}=F(x,u)$, with~$x\in \R^n$ and~$u \in\R$, this means 
 that 
there exists a continuous function~$\alpha:\mathbb{R}^3\to \mathbb{R}^n$ such that
the following property holds. 
For the sinusoidal input $u(t)=a\sin(\omega t)$, with frequency $\omega:=2\pi/T$ and amplitude $a\geq 0 $,
the solution of the contractive system converges to a periodic solution~$\gamma_{a\omega}$
satisfying 
\[
  \gamma_{a\omega}(t)=\alpha(a\sin(\omega t),a \cos(\omega t), \omega)
\]
(see \cite[Theorem 3]{pavlov2007}). The function $\alpha(v_1,v_2,w)$ is called the
 \emph{state frequency response}. For the special case of a linear system, i.e. $F(x,u)=Ax+bu$
the state frequency response is known explicitly:
\[
  {\alpha}(v_1,v_2,\omega)=\Pi(\omega)\begin{bmatrix}v_1&v_2\end{bmatrix}'
\]
with   $\Pi(\omega):=\begin{bmatrix}\text{Re}(\hat{g}(j\omega))&\text{Im}(\hat{g}(j\omega))\end{bmatrix}\in\mathbb{R}^{n\times 2}$,
and $\hat{g}(s):=(sI-A)^{-1}b$. 
That is, for linear systems, the state frequency response recovers the standard notion of frequency response.

Of course, for nonlinear systems it is typically  not possible to compute
the frequency response analytically.  Our results may be interpreted in this context as follows. 
Considering Theorem \ref{thm:averexci}, we have that $\bar{u}=\frac{1}{T}\int_0^T u(t) dt=0$ for any $a$ and $\omega$  and that~$\dot y= F(y,0)$ admits an equilibrium point~$e$. Thus,
$e=\alpha(0,0,\omega)$ (where $e$ is in fact independent of $\omega$), and~\eqref{eq:20}  may be 
interpreted as providing bounds on 
\[ 
|\alpha(a\sin(\omega t),a\cos(\omega t),\omega)-e| .  
\]

On the other-hand, the results in 
Theorem~\ref{thm:linearized}
may be interpreted as bounds on the difference 
\[
  \left|\alpha(a\sin(\omega t),a\cos(\omega t),\omega)-\bar{\alpha}(a\sin(\omega t),
	a\cos(\omega t),\omega)\right|,  
\]
where $\bar{\alpha}$ is the state frequency response of the linearized 
system~$\dot{y}=Ay+Bu$, 
with~$A=\frac{\partial F}{\partial x}(0,0)$ and~$B=\frac{\partial F}{\partial u}(0,0)$.

An interesting topic for further research is deriving more
  theoretical results using the explicit bounds described here. 
Other  possible topics include the design of an excitation signal that yields a pre-specified 
 periodic trajectory for a contractive system. This issue is  important for example
in synthetic biology, where
an important goal is to design programmable biochemical  oscillators (see e.g.,~\cite{elo_clock,fung_ocs,str_osc,weitz_osc}).
Another possible research topic is  the extension of the  results presented here
to more general classes of dynamical systems (see, e.g.,~\cite{nata_george} for a special class of infinite dimensional systems
 that admit a frequency response).

\section*{Acknowledgments} We are grateful to Eduardo  D. Sontag for reading an earlier version of this paper and providing us with many useful comments.

 \bibliographystyle{IEEEtran}

\bibliography{sam_bib}
\end{document}